\tikzset{
  solid node/.style={circle, draw, inner sep=0.5, fill=black},
}
\title{On repetitive right application of $B$-terms}
\author{Mirai Ikebuchi\\
        \url{ikebuchi.mirai@a.nagoya-u.jp}\\
        Nagoya University \\
        Aichi, Japan
        \and
        Keisuke Nakano\\
        \url{k.nakano@acm.org}\\
        University of Electro-Communications \\
        Tokyo, Japan
        }
\newtheorem{theorem}{Theorem}[section]
\newtheorem{lemma}[theorem]{Lemma}
\newtheorem{definition}[theorem]{Definition}
\newtheorem{example}[theorem]{Example}
\newcommand\qed{$\Box$}
\newcommand\proofname{\textbf{Proof}}
\newenvironment{proof}[1][\proofname]{\hspace{0pt}\\[-\baselineskip]
\textbf{#1.}~ }{\hfill\qed}
\newcommand\ie{i.e.}
\newcommand\labeqn[1]{\label{eqn:#1}}
\newcommand\refeqn[1]{(\ref{eqn:#1})}
\newcommand\labfig[1]{\label{fig:#1}}
\newcommand\reffig[1]{\textrm{Fig.\,\ref{fig:#1}}}
\newcommand\Reffig[1]{\textrm{Figure~\ref{fig:#1}}}
\newcommand\labthr[1]{\label{thr:#1}}
\newcommand\refthr[1]{\textrm{Theorem~\ref{thr:#1}}}
\newcommand\lablem[1]{\label{lem:#1}}
\newcommand\reflem[1]{\textrm{Lemma~\ref{lem:#1}}}
\newcommand\labcnj[1]{\label{cnj:#1}}
\newcommand\refcnj[1]{\textrm{Conjecture~\ref{cnj:#1}}}
\newcommand\labsec[1]{\label{sec:#1}}
\newcommand\refsec[1]{\textrm{Section~\ref{sec:#1}}}
\newcommand\paren[1]{\left(#1\right)}
\newcommand\sapp[2]{#1_{\paren{#2}}}
\newcommand\CL[1]{\mathbf{CL}(#1)}
\newcommand\nodes{\mathcal{L}}
\newcommand\x{\star}
\def\<#1>{\langle #1\rangle}
\newcommand\poly[1]{%
\langle\dots\langle\langle\x,\underbrace{\x\rangle,\x\rangle,\dots,\x\rangle}_{#1}%
}
\newcommand\size[1]{\left|\!\left|#1\right|\!\right|}
\newcommand\concat{\mathbin{{+}\mspace{-8mu}{+}}}
\newtheorem{conjecture}[theorem]{Conjecture}
\newcommand\Proof{\textit{Proof}}
\begin{document}

\maketitle

\begin{abstract}
$B$-terms are built from the $B$ combinator alone
defined by $B\equiv\lambda f.\lambda g.\lambda x. f~(g~x)$, which is
well-known as a function composition operator.
This paper investigates an interesting property of $B$-terms,
that is, whether repetitive right applications of a $B$-term circulates or not.
We discuss conditions for $B$-terms to and not to have the property
through
a sound and complete equational axiomatization.
Specifically, we give examples of $B$-terms which have the property and show that there are infinitely many $B$-terms which does not have the property. Also, we introduce a canonical representation of $B$-terms that is useful to detect cycles, or equivalently, to prove the property, with an efficient algorithm.
\end{abstract}

\section{Introduction}
\labsec{intro}
The `bluebird' combinator $B=\lambda f.\lambda g.\lambda x. f~(g~x)$
is well-known~\cite{Smullyan12book}
as a bracketing combinator or composition operator,
which has a principal type 
$(\alpha\to\beta)\to(\gamma\to\alpha)\to\gamma\to\beta$.
A function $B~f~g$ (also written as $f\circ g$)
synthesized from two functions $f$ and $g$
takes a single argument to apply $g$ and
returns the result of an application $f$ to the output of $g$.

In the general case where $g$ takes $n$ arguments to pass the output to $f$,
the synthesized function is defined by
$\lambda x_1.\cdots\lambda x_n. f~(g~x_1~\dots~x_n)$.
Interestingly, the function can be expressed by $B^n~ f~ g$
where $e^n$ is an $n$-fold composition of function $e$ such that
$e^0 = \lambda x. x$ and $e^{n+1}=B~e^{n}~e$ for $n\geq0$.
We call {\em $n$-argument composition\/} 
for the generalized composition represented by $B^n$.

Now we consider the $2$-argument composition expressed by
$B^2=\lambda f.\lambda g.\lambda x.\lambda y.~f~(g~x~y)$.
From the definition, we have $B^2 = B\circ B = B~B~B$
where function application
is considered left-associative, that is, $f~a~b=(f~a)~b$.
Thus $B^2$ is defined by an expression
in which all applications nest to the left, never to the right.
We call such an expression {\em flat\/}~\cite{Okasaki03jfp}.
In particular we write $\sapp{X}{k}$
for a flat expression involving only a combinator $X$, which is defined by
$\sapp{X}{1} = X$ and
$\sapp{X}{k+1} = \sapp{X}{k}~X (k\geq 1)$.
Okasaki~\cite{Okasaki03jfp} shows that there is no universal combinator $X$ that can represent
any combinator by $\sapp{X}{k}$ with some $k$.
Using this notation, we can write $B^2 = \sapp{B}{3}$.

Consider the $n$-argument composition expressed by $B^n$.
Surprisingly, we have $B^3 =\allowbreak B~B~B~B~B~B~B~B = \sapp{B}{8}$.
It is easy to check it by repeating $\beta$-reduction
for $\sapp{B}{8}~f~g~x~y~z = f~(g~x~y~z)$.
%
For $n>3$, however, the $n$-argument composition cannot be expressed by flat $B$-terms.
There is no integer $k$ such that $B^n = \sapp{B}{k}$
with respect to $\beta\eta$-equality.
It can be proved by {\em $\rho$-property\/} of combinator $B$,
that is introduced in this paper.
We say that a combinator $X$ has $\rho$-property 
if there exists two distinct integer $i$ and $j$ such that
$\sapp{X}{i} = \sapp{X}{j}$.
If such a pair $i, j$ is found, we have $\sapp{X}{i+k}=\sapp{X}{j+k}$
for any $k\geq 0$
~({\`a}~la \emph{finite monogenic semigroup}~\cite{Ljapin68book}).
In the case of $B$, we can check
$\sapp{B}{6}=\sapp{B}{10}=\lambda x.\lambda y.\lambda z.\lambda w.\lambda v.~x~(y~z)~(w~v)$
hence $\sapp{B}{i}=\sapp{B}{i+4}$ for $i\geq6$.
%
\reffig{rhoB} shows a computation graph of $\sapp{B}{k}$.
The $\rho$-property is named 
after the shape of the graph.
\begin{figure*}[t]\centering
\newcount\picW \newcount\picH \picW=350 \picH=60
\begin{picture}(\picW,\picH)(0,0)
\divide\picW 10 \advance\picW 1 \divide\picH 10 \advance\picH 1 
\def\Line(#1,#2)(#3,#4){
\newcount\middleX \middleX=#1 \advance\middleX #3 \divide\middleX 2
\newcount\middleY \middleY=#2 \advance\middleY #4 \divide\middleY 2
\qbezier(#1,#2)(\middleX,\middleY)(#3,#4)
}
\def\startX{0} \def\upperY{50}
\def\marginB{20} \def\spanBs{5} \def\spanBh{100} \def\spanBv{20}
\newcount\tmpX \newcount\tmpY
\def\putB[#1]{\put(\tmpX,\tmpY){\makebox(20,0){$\sapp{B}{#1}$}} \advance\tmpX 20}
\def\putBs[#1]{\newcount\tmpXs \tmpXs=\tmpX \newcount\tmpYs \tmpYs=\tmpY
\advance\tmpXs-10 \advance\tmpYs-10
\put(\tmpXs,\tmpYs){\makebox(100,0){$(#1)$}}}
\def\lineB(#1){
\newcount\nextX \nextX=\tmpX \advance\nextX #1
\Line(\tmpX,\tmpY)(\nextX,\tmpY) \tmpX=\nextX
}
\def\lineBs{\lineB(\spanBs)} \def\lineBh{\lineB(\spanBh)}
\def\lineBv{
\newcount\nextY \nextY=\tmpY \advance\nextY -\spanBv
\Line(\tmpX,\tmpY)(\tmpX,\nextY) \tmpY=\nextY
}
\def\turnR{\advance\tmpX-12 \advance\tmpY-8}
\def\turnL{\advance\tmpX-8 \advance\tmpY-8}
\tmpX=\startX \tmpY=\upperY
\putB[1] \lineBs \putB[2] \lineBs \putB[3] \lineBs \putB[4] \lineBs \putB[5] \lineBs
\putB[6] \putBs[=\sapp{B}{10}=\sapp{B}{14}=\dots]
\newcount\sixX \sixX=\tmpX \newcount\sixY \sixY=\tmpY \lineBh
\putB[7] \putBs[=\sapp{B}{11}=\sapp{B}{15}=\dots] \turnR \lineBv
\tmpX=\sixX \tmpY=\sixY \turnR \lineBv
\turnL \putB[9] \putBs[=\sapp{B}{13}=\sapp{B}{17}=\dots]
\lineBh \putB[8] \putBs[=\sapp{B}{12}=\sapp{B}{16}=\dots]
\end{picture}
\caption{$\rho$-property of the $B$ combinator}
\labfig{rhoB}
\end{figure*}
The $\rho$-property implies that
the set $\{~ \sapp{B}{k} \mid k\geq1 ~\}$ is finite.
Since none of the terms in the set is equal to $B^n$ with $n>3$
up to the $\beta\eta$-equivalence of the corresponding $\lambda$-terms,
we conclude that there is no integer $k$ such that $B^n =\sapp{B}{k}$.

This paper discusses the $\rho$-property of combinatory terms,
particularly in $\CL{B}$, called {\em $B$-terms\/},
that are built from $B$ alone.
Interestingly, 
$B~ B$ enjoys the $\rho$ property 
with $\sapp{(B~ B)}{52}=\sapp{(B~ B)}{32}$
and so does $B~ (B~ B)$ with
$\sapp{(B~ (B~ B))}{294}=\sapp{(B~ (B~ B))}{258}$
as reported~\cite{Nakano08trs}.
Several combinators other than $B$-terms
can be found enjoy the $\rho$-property,
for example,
$K=\lambda x.\lambda y. x$ and 
$C=\lambda x.\lambda y. \lambda z.~ x~ z~ y$ 
because of $\sapp{K}{3}=\sapp{K}{1}$ and 
$\sapp{C}{4}=\sapp{C}{3}$.
They are not so interesting
in the sense that the cycle starts immediately and its size is very small,
comparing with $B$-terms like $B~ B$ and $B~ (B~ B)$.
As we will see later,
$B~ (B~ (B~ (B~ (B~ (B~ B))))) (\equiv B^6~ B)$ has the $\rho$-property
with the cycle of the size more than $3\times 10^{11}$
which starts after more than $2\times 10^{12}$ repetitive right application.
This is why the $\rho$-property of $B$-terms is intensively discussed
in the present paper.
%

The contributions of the paper are two-fold.
One is to give a characterization of $\CL{B}$ and 
another is to provide a sufficient condition for 
the $\rho$-property and anti-$\rho$-property of $B$-terms. 
In the former, a canonical representation of $B$-terms is introduced
and sound and complete equational axiomatization for $\CL{B}$ is established.
In the latter, 
the $\rho$-property of $B^n B$ with $n\leq 6$ is shown with an efficient algorithm
and 
the anti-$\rho$-property for $B$-terms in particular forms
is proved.

\section{$\rho$-property of terms}
\labsec{prelim}
The $\rho$-property of combinator $X$ is that
$\sapp{X}{i}=\sapp{X}{j}$ holds for some $i>j\geq1$.
We adopt $\beta\eta$-equivalence of corresponding $\lambda$-terms
for the equivalence of combinatory terms in this paper.
We could use other equivalence, for example,
induced by the axioms of the combinatory logic.
The choice of equivalence does not have an essential influence,
e.g., $\sapp{B}{9}$ and $\sapp{B}{13}$ are equal even up to the combinatory axiom of $B$,
while $\sapp{B}{6}=\sapp{B}{10}$ holds for $\beta\eta$-equivalence.
Furthermore, for simplicity,
we only deal with the case where $\sapp{X}{n}$
is normalizable for all $n$.
If $\sapp{X}{n}$ is not normalizable, 
it is much difficult to check equivalence with the other terms.

Let us write $\rho(X)=(i,j)$ 
when a combinator $X$ has the $\rho$-property  
due to $\sapp{X}{i}=\sapp{X}{i+j}$
with minimum positive integers $i$ and $j$.
For example, we can write $\rho(B)=(6,4)$,
$\rho(C)=(3,1)$, $\rho(K)=(1,2)$ and $\rho(I)=(1,1)$.
Besides them,
several combinators introduced in the Smullyan's book~\cite{Smullyan12book}
have the $\rho$-property:
\begin{align*}
\rho(D)&=(32,20) 
&&\text{where $D=\lambda x.\lambda y.\lambda z.\lambda w. x~ y~ (z~ w)$}\\
\rho(F)&=(3,1) &&\text{where $F=\lambda x.\lambda y.\lambda z. z~ y~ x$}\\
\rho(R)&=(3,1) &&\text{where $R=\lambda x.\lambda y.\lambda z. y~ z~ x$}\\
\rho(T)&=(2,1) &&\text{where $T=\lambda x.\lambda y. y~ x$}\\
\rho(V)&=(3,1) &&\text{where $V=\lambda x.\lambda y.\lambda z. z~ x~ y$}\text.
\end{align*}
Except the $B$ and $D~(=B~ B)$ combinators,
the property is `trivial' in the sense that
loop starts early and the size of cycle is very small.

On the other hand,
the combinators $S=\lambda x.\lambda y.\lambda z. x~z~(y~z)$
and $O=\lambda x.\lambda y. y~(x~ y)$ in the book
do not have the $\rho$-property for reason~(A),
which is illustrated by
\begin{align*}
\sapp{S}{2n+1} &=
\lambda x.\lambda y. \underbrace{x~ y~ (x~ y~ (\dots (x~ y}_n~ 
(\lambda z. x~ z~ (y~ z)))\dots))
\text,
\\
\sapp{O}{n+1} &=
\lambda x. \underbrace{x~ (x~ (\dots (x~ }_n~ 
(\lambda y. y~ (x~ y))
\text.
\end{align*}
%

\begin{figure*}
\begin{align*}
\rho(B^0 B)&=(6,4)  
&\rho(B^4 B)&=(191206,431453)\\ 
\rho(B^1 B)&=(32,20) 
&\rho(B^5 B)&=(766241307,234444571)\\ 
\rho(B^2 B)&=(258,36) 
&\rho(B^6 B)&=(2641033883877,339020201163)\\ 
\rho(B^3 B)&=(4240,5796) 
\end{align*}
\caption{$\rho$-property of $B$-terms in a particular form}
\labfig{rhobsequence}
\end{figure*}
The definition of the $\rho$-property is naturally extended
from single combinators to terms obtained by combining several combinators.
We found by computation that
several $B$-terms, built from the $B$ combinator alone,
have a nontrivial $\rho$-property
as shown in~\reffig{rhobsequence}.
The detail will be shown in~\refsec{results}.

\section{Characterization of $B$-terms}
The set of all $B$-terms, $\CL{B}$, is closed under application,
that is, the repetitive right application of a $B$-term
always generates a sequence of $B$-terms.
Hence, the $\rho$-property can be decided
by checking `equivalence' among generated $B$-terms,
where the equivalence should be checked
through $\beta\eta$-equivalence of their corresponding $\lambda$-terms
in accordance with the definition of the $\rho$-property.
It would be useful if we have a simple decision procedure
for deciding equivalence over $B$-terms.

In this section,
we give a characterization of the $B$-terms
to make it easy to decide their equivalence.
We introduce a method for deciding equivalence of $B$-terms
without calculating the corresponding $\lambda$-terms.
To this end, 
we first investigate equivalence over $B$-terms
with examples
and then present an equation system as characterization of $B$-terms
so as to decide equivalence between two $B$-terms.
Based on the equation system, 
we introduce a canonical representation of $B$-terms.
The representation makes it easy to observe 
the growth caused by repetitive right application of $B$-terms,
which will be later shown for proving 
the anti-$\rho$-property of $B^{2}$.
We believe that this representation will be helpful to prove 
the $\rho$-property or the anti-$\rho$-property 
for the other $B$-terms.

\subsection{Equivalence over $B$-terms}
Two $B$-terms are said \emph{equivalent} if their corresponding $\lambda$-terms
are $\beta\eta$-equivalent.
For instance, $B~ B~ (B~ B)$ and $B~ (B~ B)~ B~ B$ are equivalent.
This can be easily shown
by the definition $B~x~y~z = x~ (y~ z)$.
For another (non-trivial) instance,
$B~ B~ (B~ B)$ and
$B~ (B~ (B~ B))~ B$ are 
equivalent.
This is illustrated
by the fact that
they are equivalent to 
$\lambda x. \lambda y. \lambda z. \lambda w.\lambda v. x~(y~z)~(w~v)$.
where $B$ is replaced with $\lambda x.\lambda y.\lambda z.~x~(y~z)$
or the other way around
at the $=_\beta$ equation.
Similarly,
we cannot show equivalence between two $B$-terms,
$B~(B~B)~(B~B)$ and $B~(B~B~B)$,
without long calculation.
This kind of equality makes it hard
to investigate the $\rho$-property of $B$-terms.
To solve the annoying issue,
we will later introduce a canonical representation of $B$-terms.

\subsection{Equational axiomatization for $B$-terms}
Equality between two $B$-terms can be effectively decided
by an equation system.
\begin{figure}
\begin{align}
B~ x~ y~ z &= x~ (y~ z)
\tag{B1}
\\
B~ (B~ x~ y) &= B~ (B~ x)~ (B~ y)
\tag{B2}
\\
B~ B~ (B~ x) &= B~ (B~ (B~ x))~ B
\tag{B3}
\end{align}
\caption{Equational axiomatization for $B$-terms}
\labfig{eqB}
\end{figure}
\Reffig{eqB} shows a sound and complete equation system
as described in the following theorem.

\begin{theorem}
\labthr{eqB}
Two $B$-terms are $\beta\eta$-equivalent 
if and only if
their equality is derived by equations~(B1), (B2), and (B3).
\end{theorem}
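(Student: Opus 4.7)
The theorem splits into two directions. For soundness (if-part), I would verify each of (B1), (B2), (B3) as a $\beta\eta$-equivalence of the corresponding $\lambda$-terms by direct computation: (B1) is exactly the $\beta$-rule for $B=\lambda f.\lambda g.\lambda z.\, f\,(g\,z)$, while (B2) and (B3) each reduce to short normalisations showing that both sides produce the same $\lambda$-term. For instance, both $B~(B~x~y)$ and $B~(B~x)~(B~y)$ $\beta$-normalise to $\lambda g.\lambda w.\, x\,(y\,(g\,w))$, and both $B~B~(B~x)$ and $B~(B~(B~x))~B$ $\beta$-normalise to $\lambda g.\lambda h.\lambda w.\, x\,(g\,(h\,w))$. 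Closure of $\beta\eta$-equivalence under application and congruence then extends these verifications to every equation derivable from (B1)--(B3).

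For completeness (only-if-part), the plan is to use (B1)--(B3) as rewrite rules to define a canonical representation for each $B$-term, and then to show that this canonical representation can be recovered from the $\beta\eta$-normal form of the associated $\lambda$-term. Concretely, I would orient the axioms into a confluent, terminating rewriting system on $B$-terms, so that every $B$-term has a unique normal form reachable by a derivation that uses only (B1)--(B3). The next step is to argue that passing from a canonical $B$-term to its $\lambda$-normal form is injective; equivalently, that the canonical $B$-term can be read off from the nesting pattern of the variables in the $\lambda$-normal form $\lambda x_1.\cdots\lambda x_n.\,t$. Once both statements are in place, completeness follows: if two $B$-terms are $\beta\eta$-equivalent then they share a $\lambda$-normal form, hence share a canonical representation, and hence are inter-derivable using (B1)--(B3).

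The main obstacle will be the injectivity step. The delicate point is that (B3) trades a subterm with three $B$'s for one with four $B$'s while reshuffling the nesting, so a naive size or shape measure on $B$-terms will not be invariant. I expect to succeed by identifying a tree-theoretic or numerical invariant of the $\lambda$-normal form---for example, a particular traversal of the application tree in the body $t$---that is demonstrably preserved by each of (B1)--(B3), records enough information to reconstruct the canonical $B$-term, and matches the canonical representation the paper announces in the following subsection. Termination of the rewriting and a confluence argument, say along Knuth--Bendix lines, should then be comparatively routine once the right canonical form is fixed.
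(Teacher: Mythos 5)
Your plan is correct and, in its essentials, the same as the paper's: soundness by direct $\beta$-computation of (B1)--(B3) (your normal forms for (B2) and (B3) check out), and completeness via a canonical form that every $B$-term reaches by the axioms, together with an injection of canonical forms into $\beta\eta$-classes read off from the application tree of the $\lambda$-normal form. The paper's canonical form is the \emph{decreasing polynomial} $(B^{n_1}B)\circ\cdots\circ(B^{n_k}B)$ with $n_1\geq\cdots\geq n_k$, and the ``traversal of the application tree'' you anticipate is precisely the function $\nodes$ of \reflem{H-uniq}, which recovers the degree sequence from the binary tree of the $\beta\eta$-normal form. Where you diverge is in how uniqueness of the canonical form is established: you propose completing (B1)--(B3) into a terminating confluent rewriting system in the style of Knuth--Bendix. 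The paper does not, and you need not either---that step is both redundant and the hardest road available. Redundant, because once you have (i) existence of an equivalent decreasing polynomial derivable by the equations (\reflem{H-exist} and \reflem{H-decr}; the latter is an insertion-sort argument using the derived swap law $(B^m B)\circ(B^n B)=(B^{n+1}B)\circ(B^m B)$ for $m<n$, equation~\refeqn{B-swap}) and (ii) injectivity of decreasing polynomials into $\beta\eta$-classes (\reflem{H-uniq}), completeness is immediate: $\beta\eta$-equivalent terms have $\beta\eta$-equivalent, hence identical, decreasing polynomials, and each is inter-derivable with its own canonical form by the axioms. Hard, because, as you yourself observe, (B3) is size-increasing and the swap law must be applied in a direction that depends on comparing degrees, so exhibiting a reduction order for a completed system is delicate. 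Dropping the confluence/termination step leaves exactly the paper's proof.
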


The proof of the if-part is given here,
which corresponds to the soundness of the equation system.
We will later prove the only-if-part
with the uniqueness of the canonical representation of $B$-terms.
\\
\begin{proof}[\proofname~of if-part of \refthr{eqB}]
Equation~(B1) is 
immediate from the definition of $B$.
Equation~(B2) and~(B3) are shown by
\begin{align*}
B~ (B~ e_1~ e_2) &= 
\lambda x.\lambda y.~ B~ e_1~ e_2~ (x~ y)
                                    & B~ B~ (B~ e_1) 
                                    &= \lambda x.~ B~ (B~ e_1~ x)
\\&=
\lambda x.\lambda y.~ e_1~ (e_2~ (x~ y))
                                    &&= \lambda x.\lambda y.\lambda z.~ B~ e_1~ x (y~ z)
\\&=
\lambda x.\lambda y.~ e_1~ (B~ e_2~ x~ y)
                                    &&= \lambda x.\lambda y.\lambda z.~ e_1~ (x~ (y~ z))
\\&=
\lambda x.~ B~ e_1~ (B~ e_2~ x)
                                    &&= \lambda x.\lambda y.\lambda z.~ e_1~ (B~ x~ y~ z)
\\&=
B~ (B~ e_1)~ (B~ e_2)
                                    &&= \lambda x.\lambda y.~ B~ e_1~ (B~ x~ y)
\\&
                                    &&= \lambda x.~ B~ (B~ e_1)~ (B~ x)
\\&                                    
                                    &&= B~ (B~ (B~ e_1))~ B
\end{align*}
where the $\alpha$-renaming is implicitly used.
\end{proof}

Equation~(B2) has been employed by Statman~\cite{Statman10type}
to show that no $B\omega$-term can be a fixed-point combinator
where $\omega=\lambda x.x~x$.
This equation exposes an interesting feature of the $B$ combinator.
Write equation~(B2) as
\begin{align}
B~ (e_1 \circ e_2) = (B~ e_1) \circ (B~ e_2)
\tag{B2'}
\labeqn{Bo-distr}
\end{align}
by replacing every $B$ combinator with $\circ$ infix operator
if it has exactly two arguments.
The equation is an distributive law of $B$ over $\circ$,
which will be used to obtain the canonical representation of $B$-terms.
Equation~(B3) is also used for the same purpose
as the form of
\begin{align}
B \circ (B~ e_1) = (B~ (B~ e_1)) \circ B\text.
\tag{B3'}
\labeqn{Bo-push}
\end{align}

One may expect a natural equation
\begin{equation}
B~ e_1~ (B~ e_2~ e_3) = B~ (B~ e_1~ e_2)~ e_3
\labeqn{B-assoc}
\end{equation}
which represents associativity of function composition,
i.e.,~$e_1\circ (e_2\circ e_3) = (e_1\circ e_2)\circ e_3$
This is shown with equations~(B1) and (B2)
by
\begin{align*}
B~ e_1~ (B~ e_2~ e_3) = B~ (B~ e_1)~ (B~ e_2)~ e_3 = B~ (B~ e_1~ e_2)~ e_3
\text.
\end{align*}

\subsection{Canonical representation of $B$-terms}
\labsec{canonical}
To decide equality between two $B$-terms,
it does not suffices to compute their normal forms
under the definition of $B$, $B~x~y~z\to x~(y~z)$.
This is because
two distinct normal forms may be equal up to $\beta\eta$-equivalence,
e.g., $B~B~(B~B)$ and $B~(B~(B~B))~B$.
We introduce a canonical representation of $B$-terms,
which makes it easy to check equivalence of $B$-terms.
We will finally find that for any $B$-term $e$
there exists a unique finite non-empty weakly-decreasing sequence
of non-negative integers
$n_1\geq n_2\geq\dots\geq n_k$ 
such that $e$ is 
equivalent
to $(B^{n_1} B) \circ (B^{n_2} B)\circ \dots \circ (B^{n_k} B)$.
Ignoring the inequality condition gives 
\emph{polynomials} introduced by Statman~\cite{Statman10type}.
We will use \emph{decreasing polynomials} for our canonical representation
as presented later.

First, we explain how its canonical form is obtained from every $B$-term.
We only need to consider $B$-terms 
in which every $B$ has at most two arguments.
One can easily reduce the arguments of $B$ to less than three
by repeatedly rewriting occurrences of $B~e_1~e_2~e_3~e_4~\dots~e_n$ into
$e_1~(e_2~e_3)~e_4~\dots~e_n$.
The rewriting procedure always terminates because it reduces the number of $B$.
Thus, every $B$-term in $\CL{B}$ is equivalent to a $B$-term built
by the syntax
\begin{align}
e &::= B ~\mid~ B ~ e ~\mid~ e ~ \circ ~ e
\labeqn{syntaxBo}
\end{align}
where $e_1\circ e_2$ denotes $B~e_1~e_2$.
We prefer to use the infix operator $\circ$ instead of $B$ that has two arguments
because associativity of $B$, that is, 
$B~e_1~(B~e_2~e_3) = B~(B~e_1~e_2)~e_3$
can be implicitly assumed.
This simplifies the further discussion on $B$-terms.
We will deal with only $B$-terms in syntax~\refeqn{syntaxBo} from now on.
The $\circ$ operator has a lower precedence than application in this paper.
Terms $B~B\circ B$ and $B\circ B~B$ represent $(B~B)\circ B$ and $B\circ(B~B)$, respectively.

The syntactic restriction by \refeqn{syntaxBo} does not suffice
to proffer a canonical representation of $B$-terms.
There are many pairs of $B$-terms which are equivalent
even in the form of \refeqn{syntaxBo}.
For example, $B\circ B~ B = B~(B~B) \circ B$ holds according to 
equation~\refeqn{Bo-push}.

A \emph{polynomial form of $B$-terms} is obtained
by putting a restriction to the syntax
so that no $B$ combinator occurs outside of the $\circ$ operator
while syntax~\refeqn{syntaxBo} allows
the $B$ combinators and the $\circ$ operators to occur
in arbitrary position.
The restricted syntax is given as
\begin{align*}
e &::= e_B ~\mid~ e \circ e 
\\ 
e_B &::= B ~\mid B~ e_B
\end{align*}
where terms in $e_B$ have a form of
$\underbrace{B (\dots(B}_{n} B)\dots)$, that is $B^n B$,
called \emph{monomial}.
The syntax can be simply rewritten into
$e ::= B^n B ~\mid~ e \circ e$,
which is called \emph{polynomial}.

\begin{definition}\rm
A $B$-term $B^n B$ is called \emph{monomial}.
A \emph{polynomial} 
is given as the form of
\begin{align*}
(B^{n_1} B) \circ (B^{n_2} B) \circ\dots\circ (B^{n_k} B)
\end{align*}
where $k>0$ and $n_1,\dots, n_k\geq0$ are integers.
In particular,
a polynomial is called \emph{decreasing}
when $n_1\geq n_2\geq\dots\geq n_k$.
The \emph{length} of a polynomial $P$
is defined by adding 1 to the number of $\circ$ in $P$.
The numbers $n_1, n_2,\dots,n_k$ are called \emph{degree}.
\end{definition}

In the rest of this subsection,
we prove that for any $B$-term $e$ there exists
a unique decreasing polynomial equivalent to $e$.
First, we show that $e$ has an equivalent polynomial.

\begin{lemma}[\cite{Statman10type}]
\lablem{H-exist}
For any $B$-term $e$,
there exists a polynomial to $e$.
\end{lemma}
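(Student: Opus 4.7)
The plan is to proceed by structural induction on $e$, where without loss of generality $e$ is built by the restricted syntax~\refeqn{syntaxBo}, namely $e ::= B \mid B~e \mid e \circ e$, as already justified in the paragraphs preceding the lemma.

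For the base case $e = B$, the singleton monomial $B^0 B = B$ is itself a polynomial of length $1$, so there is nothing to do.

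For the case $e = e_1 \circ e_2$, the induction hypothesis supplies polynomials $P_1$ and $P_2$ equivalent to $e_1$ and $e_2$ respectively. Treating $\circ$ as associative via the derived equation~\refeqn{B-assoc}, the concatenation $P_1 \circ P_2$ is again a polynomial, and it is equivalent to $e_1 \circ e_2$.

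For the case $e = B~e'$, apply the inductive hypothesis to obtain a polynomial
\begin{align*}
e' \;=\; (B^{n_1} B) \circ (B^{n_2} B) \circ \dots \circ (B^{n_k} B).
\end{align*}
Repeated application of the distributive law~\refeqn{Bo-distr}, i.e.\ $B~(f\circ g) = (B~f)\circ(B~g)$, pushes the outer $B$ past every $\circ$, producing
\begin{align*}
B~e' \;=\; (B^{n_1+1} B) \circ (B^{n_2+1} B) \circ \dots \circ (B^{n_k+1} B),
\end{align*}
which is again a polynomial (of the same length as $e'$, with every degree incremented).

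No step here presents a genuine obstacle; the argument is a routine structural induction once one observes that equation~(B2') is exactly the distributive law needed to commute an outer $B$ past a composition, and that equation~\refeqn{B-assoc} lets us flatten nested compositions. The substantive work lies not in existence but in the subsequent \emph{uniqueness} of the canonical (decreasing) polynomial, which is where equation~\refeqn{Bo-push} will be needed and which will require a more delicate rewriting argument.
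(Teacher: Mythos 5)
Your proposal is correct and follows essentially the same structural induction as the paper's own proof: base case $B$, the $B~e'$ case handled by repeated application of the distributive law~\refeqn{Bo-distr} to increment each degree, and the $e_1\circ e_2$ case handled by simply composing the two polynomials. No substantive difference.
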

\begin{proof}
We prove the statement by induction on structure of $e$.
In the case of $e\equiv B$, the term itself is polynomial.
In the case of $e\equiv B~e_1$,
assume that $e_1$ has equivalent polynomial
$(B^{n_1} B)\circ(B^{n_2} B)\circ\dots\circ(B^{n_k} B)$.
Repeatedly applying equation~\refeqn{Bo-distr} to $B~e_1$,
we obtain a polynomial equivalent to $B~e_1$ 
as $(B^{n_1+1} B) \circ (B^{n_2+1} B) \circ\dots\circ (B^{n_k+1} B)$.
In the case of $e\equiv e_1\circ e_2$,
assume that $e_1$ and $e_2$ have equivalent polynomials
$P_1$ and $P_2$, respectively.
A polynomial equivalent to $e$ is given
by $P_1\circ P_2$.
\end{proof}


Next we show that for any polynomial $P$
there exists a decreasing list equivalent to $P$.
A key equation of the proof is
\begin{align}
(B^m B) \circ (B^n B) &=
(B^{n+1} B) \circ (B^m B)
\quad
\text{when~$m<n$,}
\labeqn{B-swap}
\end{align}
which is shown by
\begin{align*}
(B^m B) \circ (B^n B)
&=
B^m (B \circ(B^{n-m} B))
\\&=
B^m (B \circ (B~(B^{n-m-1} B)))
\\&=
B^m ((B (B (B^{n-m-1} B))) \circ B)
\\&=
(B^{n+1} B) \circ (B^m B)
\end{align*}
using equations~\refeqn{Bo-distr} and~\refeqn{Bo-push}.

\begin{lemma}
\lablem{H-decr}
Any polynomial $P$ has an equivalent decreasing polynomial $P'$
such that
\begin{itemize}
\item the length of $P$ and $P'$ are equal, and
\item the lowest degrees of $P$ and $P'$ are equal.
\end{itemize}
\end{lemma}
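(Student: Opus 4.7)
The plan is to proceed by induction on the length $k$ of the polynomial $P = (B^{n_1} B) \circ \dots \circ (B^{n_k} B)$. The base case $k = 1$ is immediate: a single monomial is trivially decreasing, with both its length and its unique (and hence lowest) degree preserved.

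For the inductive step, the strategy is to use the swap equation~\refeqn{B-swap} to bubble the maximum degree to the leftmost position, and then invoke the induction hypothesis on the remaining tail. Concretely, let $j$ be the smallest index at which $M := \max(n_1, \dots, n_k)$ is attained. Successively swapping at positions $(j{-}1, j)$, then $(j{-}2, j{-}1)$, and so on, converts $P$ into
\begin{align*}
(B^{M+(j-1)} B) \circ (B^{n_1} B) \circ \dots \circ (B^{n_{j-1}} B) \circ (B^{n_{j+1}} B) \circ \dots \circ (B^{n_k} B);
\end{align*}
each swap is legal since the first one uses $n_{j-1} < M$ (by minimality of $j$) and every later one compares some $n_p \leq M$ with the already inflated $M + i > M$. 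Crucially, the length is preserved, and since each individual swap $(B^m B) \circ (B^n B) \to (B^{n+1} B) \circ (B^m B)$ with $m < n$ leaves the smaller value untouched while only inflating the larger, the multiset minimum of the degree sequence is preserved throughout.

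Now apply the induction hypothesis to the length-$(k{-}1)$ tail to obtain an equivalent decreasing polynomial $P''$ whose lowest degree is $\min(n_1, \dots, n_{j-1}, n_{j+1}, \dots, n_k)$. Prepending $B^{M+(j-1)} B$ yields a polynomial of length $k$ equivalent to $P$; it is decreasing because $M + (j-1) \geq M$ is at least the first (and hence largest) degree of $P''$, and its lowest degree is the lowest degree of $P''$, which coincides with $\min(n_1, \dots, n_k)$, since dropping one occurrence of the maximum leaves the minimum unchanged when $k \geq 2$ (the all-equal case being already decreasing without any swap). The main difficulty I anticipate is purely bookkeeping: checking that the bubble-up never disturbs the minimum and that the inflated maximum continues to dominate the recursively sorted tail; the mathematical content is already concentrated in the swap equation~\refeqn{B-swap}.
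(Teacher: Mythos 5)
Your overall strategy (induction on the length, driven by equation~\refeqn{B-swap}) is the right one, but the inductive step has a genuine gap at the point where you claim the result is decreasing. You assert that $M+(j-1)$ ``is at least the first (and hence largest) degree of $P''$,'' which tacitly assumes the largest degree of the recursively sorted tail stays bounded by $M$. The induction hypothesis gives you no control over the \emph{largest} degree of $P''$ --- only over its length and its \emph{lowest} degree --- and sorting genuinely inflates degrees, since each application of~\refeqn{B-swap} replaces $n$ by $n+1$. Concretely, take $P=(B^1 B)\circ(B^0 B)\circ(B^1 B)$. Here $M=1$ is attained at $j=1$, so no bubbling occurs; the tail is $(B^0 B)\circ(B^1 B)$, whose decreasing form is $P''=(B^2 B)\circ(B^0 B)$. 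Your construction returns $(B^1 B)\circ(B^2 B)\circ(B^0 B)$, which is not decreasing, and the correct answer (unique by \refthr{canonical}) is $(B^3 B)\circ(B^1 B)\circ(B^0 B)$, whose head degree $3$ strictly exceeds $M+(j-1)=1$. So ``extract the maximum first, then sort the rest'' cannot work as stated: the head element may itself need to be swapped past, and hence inflated above, elements created while sorting the tail.

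The paper avoids exactly this trap by peeling from the \emph{right} rather than the left: write $P\equiv P_1\circ(B^n B)$, sort $P_1$ by the induction hypothesis to obtain $P_1'$ with lowest degree $n_{k-1}$; if $n_{k-1}\ge n$, simply append $(B^n B)$, and otherwise perform one swap, which deposits $(B^{n_{k-1}} B)$ at the rightmost position and leaves a length-$(k-1)$ prefix to which the induction hypothesis is applied a second time. This works precisely because attaching a monomial on the right only requires a comparison with the \emph{lowest} degree of the already-sorted part, which is the one quantity the induction hypothesis does control. Your argument could be repaired along those lines, but not without reorganizing the recursion; as written, the step ``it is decreasing because $M+(j-1)\geq M$'' is false.
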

\begin{proof}
We prove the statement by induction on the length of $P$.
When the length is 1, that is, $P$ includes no $\circ$ operator,
$P$ itself is decreasing and the statement holds.
When the length of $P$ is $k>1$,
take $P_1$ such that $P \equiv P_1 \circ (B^n B)$.
From the induction hypothesis,
there exists a decreasing polynomial
$P_1'\equiv(B^{n_1} B)\circ(B^{n_2} B)\circ\dots\circ (B^{n_{k-1}} B)$
equivalent to $P_1$,
and the lowest degree of $P_1$ is $n_{k-1}$.
If $n_{k-1}\geq n$, then $P'\equiv P_1'\circ (B^n~ B)$ is decreasing
and equivalent to $P$.
Since the lowest degrees of $P$ and $P'$ are $n$,
the statement holds.
If $n_{k-1} < n$,
$P$ is equivalent to
\begin{align*}
\lefteqn{
(B^{n_1}~ B) \circ
\dots
\circ (B^{n_{k-2}} B) \circ (B^{n_{k-1}} B) \circ(B^n B)
}
\\&=
(B^{n_1} B) \circ
\dots
\circ (B^{n_{k-2}} B) \circ (B^{n+1} B) \circ (B^{n_{k-1}} B)
\end{align*}
due to equation~\refeqn{B-swap}.
Putting the last term as $P_2\circ(B^{n_{k-1}} B)$,
the length of $P_2$ is $k-1$
and the lowest degree of $P_2$ is greater than or equal to $n_{k-1}$.
From the induction hypothesis,
$P_2$ has an equivalent decreasing polynomial $P_2'$
of length $k-1$
and the lowest degree of $P_2'$ greater than or equal to $n_{k-1}$.
Thereby we obtain
a decreasing polynomial $P_2'\circ(B^{n_{k-1}} B)$
equivalent to $P$
and the statement holds.
\end{proof}

\begin{example}\rm
Consider a $B$-term $e=B~(B~B~B)~(B~B)~B$.
First, applying equation~(B1),
\begin{align*}
e &= B~(B~B~B)~(B~B)~(B~B)
= B~B~B~(B~B~(B~B))
= B~(B~(B~B~(B~B)))
\end{align*}
so that every $B$ has at most two arguments.
Then replace each $B$ to the infix $\circ$ operator
if it has two arguments and obtain $B~(B~(B\circ(B~B)))$
Applying equation~\refeqn{Bo-distr}, we have
\begin{align*}
B~(B~(B\circ(B~B)))
&=
B~((B~B)\circ(B~(B~B)))
\\&
=
(B~(B~B))\circ(B~(B~(B~B)))
\\&= (B^2 B)\circ(B^3 B)\text.
\end{align*}
Applying equation~\refeqn{B-swap}, 
we obtain the decreasing polynomial
$(B^4 B)\circ(B^2 B)$ equivalent to $e$.
\end{example}

Every $B$-term
has at least one equivalent decreasing polynomial as shown so far.
To conclude this subsection,
we show the uniqueness of decreasing polynomial equivalent to any $B$-term,
that is,
every $B$-term $e$ has no two distinct decreasing polynomials equivalent to $e$.

The proof is based on the idea that
$B$-terms correspond to unlabeled binary trees.
In every $\lambda$-term in $\CL{B}$,
all variables are referred exactly once (linear)
in the order they it was introduced (ordered).
More precisely, this fact is formalized as follows.
\begin{lemma}
\lablem{bform}
Every $\lambda$-term in $\CL{B}$ is $\beta\eta$-equivalent to a $\lambda$-term
of the form
$\lambda x_1.\lambda x_2.\dots.\lambda x_k.~M$ with some $k>2$ 
where $M$ is built by putting parentheses 
to appropriate positions in the sequence $x_1~ x_2~ \dots~ x_k$.
\end{lemma}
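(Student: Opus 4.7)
The plan is to proceed by structural induction on $e \in \CL{B}$, viewed as a combinatory term built from $B$. For the base case $e \equiv B$, we simply observe that $B = \lambda x_1 x_2 x_3.~x_1 (x_2~x_3)$, which is already in the required form with $k=3$ and body the parenthesization $x_1 (x_2~x_3)$ of the sequence $x_1~x_2~x_3$.

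For the inductive step $e \equiv e_1~e_2$, the induction hypothesis yields normal forms $e_1 \sim_{\beta\eta} \lambda u_1\ldots u_m.~N$ and $e_2 \sim_{\beta\eta} \lambda v_1\ldots v_n.~L$ with $m,n \geq 3$. The key structural observation is a subsidiary fact: any parenthesization of $u_1\ldots u_m$ with $m\geq 2$ admits a unique head-decomposition $N = u_1~M_1\ldots M_j$ with $1\leq j\leq m-1$, where each $M_i$ is itself a parenthesization of a consecutive subsequence, and the $M_i$'s together partition $u_2,\ldots,u_m$ in order. Applying this to $N$ and performing one $\beta$-step gives $e_1~e_2 \to_\beta \lambda u_2\ldots u_m.~(\lambda v_1\ldots v_n.~L)~M_1\ldots M_j$, and I continue by contracting the head redex.

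Two subcases arise. If $j \leq n$, the result is $\lambda u_2\ldots u_m v_{j+1}\ldots v_n.~L[v_1{:=}M_1,\ldots,v_j{:=}M_j]$, whose body, by expanding each $M_i$ inside the parenthesization $L$, is a parenthesization of $u_2\ldots u_m v_{j+1}\ldots v_n$ in order. If $j > n$, the result is $\lambda u_2\ldots u_m.~L[v_1{:=}M_1,\ldots,v_n{:=}M_n]~M_{n+1}\ldots M_j$, whose body is a parenthesization of $u_2\ldots u_m$. After $\alpha$-renaming these into fresh $x_1,\ldots,x_k$, the term has the prescribed form, and a short arithmetic check using $m,n\geq 3$ and $1\leq j\leq m-1$ gives $k \geq 3$ in both subcases.

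The main obstacle I expect is the substitution step inside $L$: one must verify that substituting a parenthesization $M_i$ of a consecutive block of fresh variables for a variable $v_i$ that occurs exactly once in $L$ yields again a parenthesization of the correctly expanded sequence. Geometrically this is just grafting a subtree onto a leaf of a binary tree, but it is the step where the proof should pause and invoke a routine secondary induction on the tree structure of $L$ before concluding the main argument.
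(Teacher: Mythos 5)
Your proof is correct and follows exactly the route the paper intends: the paper's own proof is just the one-line remark ``by induction on the structure of terms in $\CL{B}$,'' and your argument is that structural induction carried out in full, with the head-decomposition, the two subcases on $j$ versus $n$, and the grafting observation supplying the details the paper omits. The arithmetic also checks out (in the first subcase $j\leq m-1$ gives $k=(m-1)+(n-j)\geq n\geq 3$, and in the second $j>n\geq 3$ forces $m\geq 5$), so nothing further is needed.
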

\begin{proof}
This can be proved by induction on structure of terms in $\CL{B}$.
\end{proof}

This lemma implies that 
every $\lambda$-term in $\CL{B}$ is characterized 
by an unlabeled binary tree.
A $\lambda$-term in $\CL{B}$ is constructed for any unlabeled binary tree
by putting a variable to each leaf in the order of $x_1, x_2,\dots$
and enclosing it with $k$-fold lambda abstraction
$\lambda x_1.\lambda x_2.\dots.\lambda x_k.~[~]$
where $k$ is a number of leaves of the binary tree.
Let us use the notation $\x$ for a leaf
and $\<t_1,t_2>$ for a tree
with left subtree $t_1$ and right subtree $t_2$.
For example,
$B$-terms $B=\lambda x.\lambda y.\lambda z.~x~(y~z)$ and
$B~B=\lambda x.\lambda y.\lambda z.\lambda w.~x~y~(z~w)$ are
represented by
$\<\x,\<\x,\x>>$ and
$\<\<\x,\x>,\<\x,\x>>$, respectively.

We will present an algorithm
for constructing the corresponding decreasing polynomial from a given binary tree.
First let us define a function $\nodes_i$ with integer $i$
which maps binary trees to lists of integers:
\begin{align*}
\nodes_i(\x) &= [~]
&
\nodes_i(\<t_1,t_2>) &=
\nodes_{i+\size{t_1}}(t_2) \concat \nodes_i(t_1) \concat [i]
\end{align*}
where $\concat$ concatenates two lists
and $\size{t}$ denotes a number of leaves.
For example,
$\nodes_0(\<\<\x,\x>,\<\x,\x>>) = [2,0,0]$ and
$\nodes_1(\<\<\x,\<\x,\x>>,\<\x,\<\x,\x>>>) = [4,4,2,1,1]$.
Informally, the $\nodes_i$ function returns a list of integers
which is obtained by labeling both leaves and nodes
in the following steps.
First each leaf of a given tree is labeled by $i,i+1,i+2,\dots$
in left-to-right order.
Then each binary node of the tree is labeled by
the same label as its leftmost descendant leaf.
The $\nodes_i$ functions returns a list of only node labels
in decreasing order.
The length of the list equals
the number of nodes, that is, 
smaller by one than the number of variables in the $\lambda$-term.

We define a function $\nodes$ which takes a binary tree $t$ and
returns a list of non-negative integers in $\nodes_{-1}(t)$,
that is, the list obtained by excluding trailing all $-1$'s in $\nodes_{-1}(t)$.
Note that by excluding the label $-1$'s
it may happen to be $\nodes(t) = \nodes(t')$ for
two distinct binary trees $t$ and $t'$
even though the $\nodes_i$ function is injective.
However, those binary trees $t$ and $t'$ must be `$\eta$-equivalent'
in terms of the corresponding $\lambda$-terms.

The following lemma claims that the $\nodes$ function computes
a list of degrees of a decreasing polynomial corresponding to a given $\lambda$-term.

\begin{lemma}
\lablem{H-uniq}
A decreasing polynomial $(B^{n_1} B)\circ(B^{n_2} B)\circ\dots\circ(B^{n_k} B)$
is $\beta\eta$-equivalent to a $\lambda$-term $e\in\CL{B}$
corresponding a binary tree $t$ such that
$\nodes(t) = [n_1,n_2,\dots,n_k]$.
\end{lemma}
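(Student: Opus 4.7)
The plan is to proceed by induction on the length $k$ of the decreasing polynomial, strengthening the induction hypothesis to include an auxiliary invariant: the tree $t$ should satisfy not only (a) $\nodes(t) = [n_1, \dots, n_k]$ but also (b) the deepest $n_k$ right-sibling subtrees along $t$'s leftmost spine are all single leaves $\x$. Clause (b) will be essential in the inductive step, because it forces the leaf-count offsets arising in the recursive unfolding of $\nodes_{-1}$ to equal precisely $n_k$.

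For the base case $k = 1$, the polynomial $B^{n_1} B$ corresponds to the tree $T_{n_1} := \<S_{n_1}, \<\x, \x>>$ where $S_{n_1}$ is the left-linear tree with $n_1 + 1$ leaves. Unfolding $\nodes_{-1}$ shows that every spine node of $T_{n_1}$ has the overall leftmost leaf as its leftmost descendant and therefore receives label $-1$, while the only remaining internal node (the root of $\<\x, \x>$) receives label $n_1$; after stripping trailing $-1$s we get $\nodes(T_{n_1}) = [n_1]$, and clause (b) is immediate from the shape of $T_{n_1}$.

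For the inductive step I would decompose $P = P_1 \circ (B^{n_k} B)$ with $P_1 := (B^{n_1} B) \circ \dots \circ (B^{n_{k-1}} B)$ and apply the strengthened IH to $P_1$ to obtain a tree $t_1$ of leftmost-leaf depth $d$ with right-sibling subtrees $R_1, \dots, R_d$ along the spine satisfying both (a) and (b). Since the polynomial is decreasing, $d \geq n_{k-1} + 1 \geq n_k + 1$ and the deepest $n_k$ siblings $R_{d-n_k+1}, \dots, R_d$ are single leaves. I then $\beta\eta$-reduce $P = \lambda z.\, P_1((B^{n_k} B) z)$ by substituting $(B^{n_k} B) z$ for the leftmost leaf of $t_1$'s body. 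The reduction splits into Case A ($d \geq n_k + 2$), in which $(B^{n_k} B) z$ is saturated and the siblings $R_{d-n_k-1}$ and $R_{d-n_k}$ merge into a new pair node $\<R_{d-n_k}, R_{d-n_k-1}>$ while the spine shortens by one, or Case B ($d = n_k + 1$), in which $\eta$-expansion supplies a fresh variable $y_1$ forming a pair $\<R_1, y_1>$ at the root position. In either case the deepest $n_k$ siblings of the resulting tree $t$ are inherited single leaves, so clause (b) transfers to $t$.

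To verify clause (a), I would unfold $\nodes_{-1}(t)$ along the new spine. In Case A, the new pair sits at spine offset $|R_{d-n_k+1}| + \dots + |R_d|$, which equals $n_k$ by clause (b) of the IH, and because the siblings below the pair are single leaves they contribute the empty list to the unfolding; the remaining contributions reproduce $\nodes(t_1)$, yielding $\nodes(t) = \nodes(t_1) \concat [n_k]$. In Case B only the top pair $\<R_1, y_1>$ contributes to $\nodes(t)$ (since $R_2, \dots, R_d$ are single leaves), and the recursion gives $\nodes_{n_k}(\<R_1, y_1>) = \nodes_{n_k}(R_1) \concat [n_k]$; using clause (b) for $t_1$, the offset for $R_1$ in the old $\nodes$ calculation also equals $n_k$, hence $\nodes_{n_k}(R_1) = \nodes(t_1) = [n_1, \dots, n_{k-1}]$. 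The main obstacle is this offset bookkeeping: verifying that clause (b) makes all the offsets in $\nodes_{-1}$ align so that the newly inserted label is exactly $n_k$, which requires careful summation of leaf counts along the spine in each case.
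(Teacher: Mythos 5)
Your proposal is correct and follows essentially the same route as the paper's proof: induction on the length of the polynomial, peeling off the last monomial $B^{n_k}B$, explicitly $\beta$-reducing $P_1((B^{n_k}B)\,z)$, splitting on whether the application is saturated or requires an extra abstraction, and checking that $\nodes(t)=\nodes(t_1)\concat[n_k]$. The only organizational difference is that you carry the "deepest siblings are leaves" fact as a strengthened induction invariant, whereas the paper derives the same structural form of $t_1$ directly from the minimality of $n_k$ in $\nodes(t_1)$, and it distinguishes four cases where your two suffice.
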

\begin{proof}
We prove the statement by induction on the length of the polynomial $P$.

When $P\equiv B^{n} B$ with $n\geq0$,
it is found to be equivalent to the $\lambda$-term
\begin{equation*}
\lambda x_1.
\lambda x_2.
\lambda x_3.
\dots.
\lambda x_{n+1}.\lambda x_{n+2}.
\lambda x_{n+3}.~ x_1~
x_2~ 
x_3~ 
\dots~ x_{n+1}~ (x_{n+2}~ x_{n+3})
\end{equation*}
by induction on $n$.
This $\lambda$-term corresponds to a binary tree
$t = \<\poly{\text{$n$ leaves}},\<\x,\x>>$.
Then we have $\nodes(t) = [n]$ holds
from $\nodes_{-1}(t) = [n,\underbrace{-1,-1,\dots,-1}_{n+1}]$.

When $P\equiv P'\circ(B^{n} B)$ with
$P'\equiv(B^{n_1} B)\circ\dots\circ(B^{n_{k}} B)$,
$k\geq 1$ and $n_1\geq\dots\geq n_{k}\geq n\geq0$,
there exists a $\lambda$-term $\beta\eta$-equivalent to $P'$
corresponding a binary tree $t'$ such that $\nodes(t')=[n_1,\dots,n_{k}]$
from the induction hypothesis.
The binary tree $t'$ must have
the form of $\<\<\poly{\text{$n_k$ leaves}},t_1>,\dots,t_m>$
with $m\geq1$ and some trees $t_1,\dots,t_m$,
otherwise $\nodes(t')$ would contain an integer smaller than $n_k$.
From the definition of $\nodes$ and $\nodes_i$, we have
\begin{align}
\nodes(t') &= \nodes_{s_m}(t_m) \concat \dots \concat \nodes_{s_1}(t_1)
\labeqn{nodes_t'}
\end{align}
where $s_j = n_k + 1 + \sum_{i=1}^{j-1}\size{t_i}$.
Additionally, the structure of $t'$ implies
$
P' =
\lambda x_1.\dots.\lambda x_{l}.~\allowbreak
x_1~x_2~\dots~x_{n_k+1}~e_1\dots e_m
$
where $e_i$ corresponds to a binary tree $t_i$ for $i=1,\dots,m$.
From $B^{n}~B=
\lambda y_1.\dots.\lambda y_{n+3}.~\allowbreak
y_1~ y_2\dots y_{n+1}~ (y_{n+2}~ y_{n+3})$,
we compute a $\lambda$-term $\beta\eta$-equivalent to $P\equiv P'\circ(B^n B)$ by
\begin{align*}
P
&=
\lambda x.~ P' (B^{n} B~ x)
\\&=
\lambda x.~
(\lambda x_1.\dots.\lambda x_{l}.~
x_1~x_2\dots x_{n_k+1}~e_1~\dots~e_{m})
\\&\qquad\qquad
(\lambda y_2.\dots.\lambda y_{n+3}.~
x~ y_2\dots y_{n+1}~ (y_{n+2}~ y_{n+3}))
\\&=
\lambda x.
\lambda x_2.\dots.\lambda x_{l}.~
(\lambda y_2.\dots.\lambda y_{n+3}.~
x~ y_2\dots y_{n+1}~ (y_{n+2}~ y_{n+3}))
~x_2\dots x_{n_k+1}~e_1\dots e_{m}
\\&=
\lambda x.
\lambda x_2.\dots.\lambda x_{l}.~
\\&\qquad
(\lambda y_{n+1}.\lambda y_{n+2}.\lambda y_{n+3}.~
x~ x_2\dots x_n~y_{n+1}~(y_{n+2}~y_{n+3}))~
x_{n+1}\dots x_{n_k+1}~e_1\dots e_{m}
\end{align*}
where $n_k\geq n$ is taken into account.
We split into four cases:
(i) $n_k = n$ and $m=1$, 
(ii) $n_k = n$ and $m>1$, 
(iii) $n_k = n+1$, and
(iv) $n_k > n+1$.
In the case~(i) where $n_k = n$ and $m=1$, we have
\begin{align*}
P &= 
\lambda x.\lambda x_2.\dots.\lambda x_{l}.\lambda y_{n+3}.~
x~ x_2\dots x_n~x_{n+1}~(e_1~y_{n+3})\text.
\end{align*}
whose corresponding binary tree $t$ is 
$\<\poly{\text{$n$ leaves}},\<t_1,\x>>$.
From equation~\refeqn{nodes_t'},
$\nodes(t) = \nodes_{n+1}(t_1)\concat [n+1]
= \nodes(t')\concat[n+1] = [n_1,\dots,n_k,n+1]$, thus the statement holds.
In the case~(ii) where $n_k = n$ and $m>1$, we have
\begin{align*}
P &= 
\lambda x.\lambda x_2.\dots.\lambda x_{l}.~
x~ x_2\dots x_n~x_{n+1}~(e_1~e_2)~e_3\dots e_{m}\text.
\end{align*}
whose corresponding binary tree $t$ is 
$\<\<\poly{\text{$n$ leaves}},\<t_1,t_2>,t_3>,\dots,t_m>$.
Hence, $\nodes(t)=\nodes(t')\concat[n+1]$ holds again from equation~\refeqn{nodes_t'}.
In the case~(iii) where $n_k = n+1$, we have
\begin{align*}
&P = 
\lambda x.\lambda x_2.\dots.\lambda x_{l}.~
x~ x_2\dots x_n~x_{n+1}~(x_{n+2}~e_1)~e_2\dots e_{m}\text{, or}
\end{align*}
whose corresponding binary tree $t$ is 
$\<\<\poly{\text{$n$ leaves}},\<\x,t_1>,t_2>,\dots,t_m>$.
Hence, $\nodes(t)=\nodes(t')\concat[n+1]$ holds from equation~\refeqn{nodes_t'}.
In the case~(iv) where $n_k\geq n+2$, we have
\begin{align*}
P = 
\lambda x.\lambda x_2.\dots.\lambda x_{l}.~
x~ x_2\dots x_n~x_{n+1}~(x_{n+2}~x_{n+3})~\dots~e_1\dots e_{m}\text,
\end{align*}
whose corresponding binary tree $t$ is 
$\<\<\poly{\text{$n$ leaves}},\<\x,\x>,\dots,t_1>,\dots,t_m>$.
Hence, $\nodes(t)=\nodes(t')\concat[n+1]$ holds from equation~\refeqn{nodes_t'}.
\end{proof}

\begin{example}\rm
A $\lambda$-term 
$\lambda x_1.\lambda x_2.\lambda x_3.\lambda x_4.\lambda x_5.\lambda x_6.
\lambda x_7.\lambda x_8.~
x_1~ (x_2~ x_3)~ (x_4~ x_5~ x_6~ (x_7~ x_8))$
is $\beta\eta$-equivalent to 
$(B^{5}~ B)\circ(B^{2}~ B)\circ(B^{2}~ B)\circ(B^{2}~ B)\circ(B^{0}~ B)$
because
its corresponding binary tree 
$t=\<\<\<\x,\x>>,\<\<\<\x,\x>,\x>,\<\x,\x>>>$
satisfies
$\nodes(t) = [5,2,2,2,0]$.
\end{example}

The previous lemmas immediately conclude the uniqueness of 
of decreasing polynomials for a $B$-term shown in the following theorem.


%
\begin{theorem}
\labthr{canonical}
Every $B$-term $e$ has a unique decreasing polynomial.
\end{theorem}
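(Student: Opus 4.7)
The plan is to split the proof into existence and uniqueness. Existence is immediate: \reflem{H-exist} produces a polynomial equivalent to $e$, and \reflem{H-decr} then rewrites it into an equivalent decreasing polynomial.

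For uniqueness, suppose that $P_1 \equiv (B^{n_1} B)\circ\dots\circ(B^{n_k} B)$ and $P_2 \equiv (B^{m_1} B)\circ\dots\circ(B^{m_l} B)$ are two decreasing polynomials both $\beta\eta$-equivalent to $e$. By \reflem{H-uniq}, there exist binary trees $t_1, t_2$ whose associated $\lambda$-terms are $\beta\eta$-equivalent to $P_1, P_2$ respectively, and which satisfy $\nodes(t_1) = [n_1,\dots,n_k]$ and $\nodes(t_2) = [m_1,\dots,m_l]$. Since both trees induce $\lambda$-terms $\beta\eta$-equivalent to $e$, those $\lambda$-terms are mutually $\beta\eta$-equivalent, so it suffices to verify that $\nodes$ is an invariant of $\beta\eta$-equivalence on $\lambda$-terms arising from binary trees.

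By \reflem{bform} the $\lambda$-terms of $t_1, t_2$ already sit in the canonical shape $\lambda x_1.\dots.\lambda x_k.\,M$ with $M$ a parenthesization of $x_1\,x_2\cdots x_k$, so both are $\beta$-normal and linear in their bound variables. The only $\beta\eta$-equivalences among such normal forms are therefore outermost $\eta$-expansions/contractions, namely $\lambda x_1.\dots.\lambda x_k.\,M \;=_\eta\; \lambda x_1.\dots.\lambda x_{k+1}.\,(M\,x_{k+1})$, which at the tree level corresponds exactly to the transformation $t \mapsto \<t,\x>$. A direct computation from the definition
\[
\nodes_{-1}(\<t,\x>) = \nodes_{-1+\size{t}}(\x) \concat \nodes_{-1}(t) \concat [-1] = \nodes_{-1}(t)\concat[-1],
\]
shows that this operation merely appends a trailing $-1$ to $\nodes_{-1}(t)$, and is thus erased by the trimming step in the definition of $\nodes$. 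Iterating, any chain of outermost $\eta$-steps between two canonical trees leaves $\nodes$ unchanged, so $\nodes(t_1) = \nodes(t_2)$ and hence $[n_1,\dots,n_k] = [m_1,\dots,m_l]$.

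I expect the main obstacle to be the claim that the only $\beta\eta$-equivalences between canonical-form $\lambda$-terms of \reflem{bform} are these outermost $\eta$-adjustments. Two $\beta$-normal forms are $\beta\eta$-equivalent iff they share a common $\eta$-long normal form, so one must rule out $\eta$-conversions at inner positions; here linearity and the fact that each variable $x_i$ appears as an \emph{argument} (not as a function head) prevent any internal subterm from matching the pattern $\lambda y.\,N\,y$ with $y$ absent from $N$. Spelling this out carefully — perhaps by induction on the parenthesization $M$ — is the technical crux; once it is in hand, the uniqueness statement follows immediately from the calculation above.
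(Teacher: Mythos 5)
Your proposal is correct and follows essentially the same route as the paper: existence from \reflem{H-exist} and \reflem{H-decr}, and uniqueness from \reflem{H-uniq} together with the fact that $\nodes$ is invariant under the $\eta$-identifications (trees differing only by trailing $\<\cdot,\x>$), which the paper asserts in the remark preceding \reflem{H-uniq} and you work out explicitly. The extra detail you supply about outermost $\eta$-redexes being the only ones is a useful elaboration of a point the paper leaves implicit, but it is not a different argument.
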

\begin{proof}
For any given $B$-term $e$, 
we can find a decreasing polynomial for $e$
from \reflem{H-exist} and \reflem{H-decr}.
Since no other decreasing polynomial an be equivalent to $e$
from \reflem{H-uniq}, the present statement holds.
\end{proof}

This theorem implies that
the decreasing polynomial of $B$-terms can be used 
as their \emph{canonical representation},
which is effectively derived as shown in
\reflem{H-exist} and \reflem{H-decr}.

As a corollary of the theorem,
we can show the only-if statement of \refthr{eqB},
which corresponds to the completeness of the equation system.

\begin{proof}[\proofname ~of only-if-part of \refthr{eqB}]
Let $e_1$ and $e_2$ be equivalent $B$-terms,
that is, their $\lambda$-terms are $\beta\eta$-equivalent.
From \refthr{canonical},
their decreasing polynomials are the same.
Since the decreasing polynomial is derived from $e_1$ and $e_2$
by equations~(B1), (B2), and (B3)
according to the proofs of \reflem{H-exist} and \reflem{H-decr},
equivalence between $e_1$ and $e_2$ is also derived from these equations.
\end{proof}


\section{Results on the $\rho$-property of $B$-terms}
\labsec{results}
We investigate the $\rho$-property of concrete $B$-terms,
some of which have the property and others do not.
For $B$-terms having the $\rho$-property,
we introduce an efficient implementation 
to compute the entry point and the size of the cycle.
For $B$-terms not having the $\rho$-property,
we give a proof why they do not have.
%
%
%

\subsection{$B$-terms having the $\rho$-property}
As shown in~\refsec{prelim},
we can check that
$B$-terms equivalent to $B^n B$ with $n\leq 6$
have the $\rho$-property
by computing $\sapp{(B^n B)}{i}$ for each $i$.
However, it is not easy to check it by computer
without an efficient implementation
because we should compute all $\sapp{(B^6 B)}{i}$ with
$i\leq 2980054085040~ (= 2641033883877+339020201163)$
to know $\rho(B^6 B)=(2641033883877,339020201163)$.
A naive implementation
which computes terms of $\sapp{(B^6 B)}{i}$ for all $i$
and stores all of them
has no hope to detect the $\rho$-property.

We introduce an efficient procedure
to find the $\rho$-property of $B$-terms
which can successfully compute $\rho(B^6 B)$.
The procedure is based on two orthogonal ideas,
Floyd's cycle-finding algorithm~\cite{Knuth69book}
and
an efficient right application algorithm over decreasing polynomials
presented in~\refsec{canonical}.

The first idea, Floyd's cycle-finding algorithm
(also called the tortoise and the hare algorithm),
enables to detect the cycle with a constant memory usage,
that is, the history of all terms $\sapp{X}{i}$ does not
need to be stored to check the $\rho$-property of the $X$ combinator.
The key of this algorithm is the fact that
there are two distinct integers $i$ and $j$ with $\sapp{X}{i}=\sapp{X}{j}$
if and only if 
there are an integer $m$ with $\sapp{X}{m}=\sapp{X}{2m}$,
where the latter requires to compare $\sapp{X}{i}$ and $\sapp{X}{2i}$
from smaller $i$ and store only these two terms
for the next comparison between
$\sapp{X}{i+1}=\sapp{X}{i}X$
and $\sapp{X}{2i+2}=\sapp{X}{2i}XX$
when $\sapp{X}{i}\neq\sapp{X}{2i}$.
The following procedure computes the entry point and the size of the cycle
if $X$ has the $\rho$-property.
\begin{enumerate}
\item Find the smallest $m$ such that $\sapp{X}{m}=\sapp{X}{2m}$.
\item Find the smallest $k$ such that $\sapp{X}{k}=\sapp{X}{m+k}$.
\item Find the smallest $0<c\leq k$ such that $\sapp{X}{m}=\sapp{X}{m+c}$.
If not found, put $c=m$.
\end{enumerate}
After this procedure, we find $\rho(X)=(k,c)$.
The third step can be simultaneously run during the second one.
See~\cite[exercise 3.1.6]{Knuth69book} for the detail.
One could use slightly more (possibly) efficient algorithms
by Brent~\cite{Brent80bit} and Gosper~\cite[item 132]{Beeler72mit}
for cycle detection.

Efficient cycle-finding algorithms do not suffice to compute $\rho(B^6 B)$.
Only with the idea above
running on a laptop (1.7 GHz Intel Core i7 / 8GB of memory),
it takes about 2 hours even for $\rho(B^5 B)$
fails to compute $\rho(B^6 B)$ due to out of memory.

The second idea
enables to efficiently compute $\sapp{X}{i+1}$ from $\sapp{X}{i}$
for $B$-terms $X$.
The key of this algorithm is to use the canonical representation
of $\sapp{X}{i}$, that is a decreasing polynomial,
and directly compute the canonical representation of $\sapp{X}{i+1}$
from that of $\sapp{X}{i}$.
Our implementation based on both ideas 
can compute $\rho(B^5 B)$ and $\rho(B^6 B)$
in 10 minutes and 59 days (!), respectively.


%
For two given decreasing polynomials $P_1$ and $P_2$,
we show how a decreasing polynomial $P$ equivalent to $(P_1~P_2)$
can be obtained.
The method is based on the following lemma
about application of one $B$-term to another $B$-term.
%
%
\begin{lemma}
\lablem{comp-app}
For $B$-terms $e_1$ and $e_2$,
there exists $k\geq0$ such that
$e_1\circ(B~ e_2) = B~ (e_1~ e_2)\circ B^k$.
\end{lemma}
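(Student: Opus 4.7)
The plan is to prove the lemma by structural induction on $e_1$, using the restricted syntax $e ::= B \mid B~e \mid e\circ e$ from equation~\refeqn{syntaxBo}. In each case I will exhibit a specific $k$ that works, relying on the soundness of the axioms~(B1), (B2'), and (B3') together with the associativity of $\circ$.

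For the base case $e_1 \equiv B$, equation~\refeqn{Bo-push} gives directly $B\circ(B~e_2) = (B~(B~e_2))\circ B$, which is $B~(B~e_2)\circ B^1$; and since $B~e_2 = e_1~e_2$, this is of the desired form with $k=1$. For the case $e_1 \equiv B~e_1'$, apply~\refeqn{Bo-distr} to obtain $(B~e_1')\circ(B~e_2) = B~(e_1'\circ e_2) = B~(B~e_1'~e_2) = B~(e_1~e_2)$, and observe that $B~(e_1~e_2) = B~(e_1~e_2)\circ B^0$ up to $\eta$-equivalence since $B^0 = \lambda x.x$. So $k=0$ works here.

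The interesting case is $e_1 \equiv e_1'\circ e_1''$. Using associativity of $\circ$, which follows from (B1) and (B2) as noted after equation~\refeqn{B-assoc}, I rewrite $e_1\circ(B~e_2) = e_1'\circ(e_1''\circ(B~e_2))$. The induction hypothesis on $e_1''$ with $e_2$ produces some $k_1$ with $e_1''\circ(B~e_2) = B~(e_1''~e_2)\circ B^{k_1}$, so the expression becomes $e_1'\circ(B~(e_1''~e_2))\circ B^{k_1}$. Since $e_1''~e_2$ is itself a $B$-term, the induction hypothesis on $e_1'$ with this new right-hand argument yields some $k_2$ with $e_1'\circ(B~(e_1''~e_2)) = B~(e_1'~(e_1''~e_2))\circ B^{k_2}$. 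Combining and using $B^{k_2}\circ B^{k_1} = B^{k_1+k_2}$, together with $e_1'~(e_1''~e_2) = (e_1'\circ e_1'')~e_2 = e_1~e_2$ by~(B1), gives $e_1\circ(B~e_2) = B~(e_1~e_2)\circ B^{k_1+k_2}$, so $k = k_1+k_2$ works.

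The main subtleties are bookkeeping rather than genuine obstacles: one must be careful that the induction hypothesis is stated uniformly in $e_2$ (so that it can be re-instantiated with $e_1''~e_2$), that the composition powers $B^{k_i}$ combine correctly on the right, and that $B^0$ acts as identity under $\eta$-equivalence so that the case $k=0$ is handled cleanly. Apart from these, the argument is a straightforward application of the three axioms (B1), (B2'), (B3').
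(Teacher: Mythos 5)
Your proof is correct, but it takes a genuinely different route from the paper's. You argue by structural induction on $e_1$ over the restricted syntax $e ::= B \mid B~e \mid e\circ e$, keeping the induction hypothesis uniform in the right-hand argument so that it can be re-instantiated at the $B$-term $e_1''~e_2$ in the composite case; the three cases are discharged by \refeqn{Bo-push}, by \refeqn{Bo-distr}, and by associativity plus two uses of the hypothesis, and your bookkeeping ($B^{k_2}\circ B^{k_1}=B^{k_1+k_2}$, $e_1'~(e_1''~e_2)=(e_1'\circ e_1'')~e_2$ via (B1), and $\circ B^0$ acting as the identity up to $\eta$) is all sound. The paper instead first replaces $e_1$ by its decreasing polynomial $P_1$ (available by \reflem{H-exist} and \reflem{H-decr}) and splits on whether the maximum degree of $P_1$ is zero: if so, $P_1=B^{k'}$ and \refeqn{Bo-push} is applied $k'$ times; otherwise $P_1$ factors as $(B~P')\circ B^{k'}$ by \refeqn{Bo-distr} and a short calculation finishes. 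The paper's version buys an explicit witness --- $k$ is the number of trailing degree-$0$ units of the canonical form of $e_1$, which is exactly what the subsequent right-application algorithm strips off in its third step --- while yours is more elementary and self-contained, needing only the soundness of (B1)--(B3) and associativity rather than the canonical-form machinery.
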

\begin{proof}
Let $P_1$ be a decreasing polynomial equivalent to $e_1$.
We prove the statement by case analysis on the maximum degree in $P_1$.
When the maximum degree is 0, 
we can take $k'\geq1$ such that
$P_1\equiv\underbrace{B\circ\dots\circ B}_{k'}=B^{k'}$.
Then,
\begin{align*}
e_1\circ(B~ e_2)&= 
\underbrace{B\circ\dots\circ B}_{k'}
\circ(B~ e_2) 
= (B^{k'+1} e_2)\circ\underbrace{B\circ\dots\circ B}_{k'}
= B~ (e_1~ e_2)\circ B^{k'}
\end{align*}
where equation~\refeqn{Bo-push} is used $k'$ times in the second equation.
Therefore the statement holds by taking $k=k'$.
When the maximum degree is greater than 0,
we can take a decreasing polynomial $P'$ for a $B$-term and $k'\geq0$ such that
$P_1=(B~ P')\circ \underbrace{B\circ\dots\circ B}_{k'}=
(B~ P')\circ B^{k'}$ due to equation~\refeqn{Bo-distr}.
Then,
\begin{align*}
e_1\circ(B~ e_2)&= 
(B~ P')\circ\underbrace{B\circ\dots\circ B}_{k'}\circ(B~ e_2)
\\&=
(B~ P')\circ(B^{k'+1} e_2)\circ\underbrace{B\circ\dots\circ B}_{k'}
\\&=
B~ (P'\circ(B^{k'} e_2))\circ B^{k'}
\\&=
B~ (B~P'~(B^{k'} e_2))\circ B^{k'}
\\&=
B~ (P_1~ e_2)\circ B^{k'}
\\&=
B~ (e_1~ e_2)\circ B^{k'}\text.
\end{align*}
Therefore, the statement holds by taking $k=k'$.
\end{proof}

This lemma indicates that,
from two decreasing polynomials $P_1$ and $P_2$,
a decreasing polynomial $P$ equivalent to $(P_1~P_2)$
can be obtained by the following steps:
\begin{enumerate}
\item Build $P_2'$ by raising each degree of $P_2$ by 1,
\ie,~when $P_2\equiv (B^{n_1} B)\circ\dots\circ(B^{n_l} B)$,
$P_2' \equiv(B^{n_1+1} B)\circ\dots\circ(B^{n_l+1} B)$.
\item Find a decreasing polynomial $P_{12}$
corresponding to $P_1\circ P_2'$ by equation~\refeqn{Bo-push};
\item Obtain $P$ by lowering each degree of $P_{12}$ after
eliminating the trailing 0-degree units,
i.e.,~when $P_{12}\equiv (B^{n_1} B)\circ\dots\circ(B^{n_l} B)
\circ(B^{0} B)\circ\dots\circ(B^{0} B)$ with $n_1\geq\dots\geq n_l>0$,
$P\equiv(B^{n_1-1} B)\circ\dots\circ(B^{n_l-1} B)$.
\end{enumerate}
In the first step,
a decreasing polynomial $P_2'$ equivalent to $B~P_2$
is obtained.
The second step yields
a decreasing polynomial $P_{12}$
for $P_1\circ P_2'=P_1\circ(B~P_2)$.
Since $P_1$ and $P_2$ are decreasing,
it is easy to find $P_{12}$
by repetitive application of equation~\refeqn{Bo-push}
for each unit of $P_2'$,
{\`a}~la insertion operation in insertion sort.
In the final step,
a polynomial $P$ that satisfies
$(B~ P)\circ B^{k} = P_{12}$ with some $k$ is obtained.
From \reflem{comp-app} and the uniqueness of decreasing polynomials, 
$P$ is equivalent to $(P_1~P_2)$.

\begin{example}\rm
Let $P_1$ and $P_2$ be decreasing polynomials given as
$P_1 = (B^4 B)\circ(B^1 B)\circ(B^0 B)$
and
$P_2 = (B^2 B)\circ(B^0 B)$.
Then a decreasing polynomial $P$ equivalent to $(P_1~P_2)$
is obtained by three steps:
\begin{enumerate}
\item Raise each degree of $P_2$ to get 
$P_2'=(B^3 B)\circ(B^1 B)$.
\item Find a decreasing polynomial by
\begin{align*}
P_1\circ P_2'&=
(B^4 B)\circ(B^1 B)\circ(B^0 B)\circ(B^3 B)\circ(B^1 B)
\\&=
(B^4 B)\circ(B^1 B)\circ(B^4 B)\circ(B^0 B)\circ(B^1 B)
\\&=
(B^4 B)\circ(B^5 B)\circ(B^1 B)\circ(B^0 B)\circ(B^1 B)
\\&=
(B^6 B)\circ(B^4 B)\circ(B^1 B)\circ(B^0 B)\circ(B^1 B)
\\&=
(B^6 B)\circ(B^4 B)\circ(B^1 B)\circ(B^2 B)\circ(B^0 B)
\\&=
(B^6 B)\circ(B^4 B)\circ(B^3 B)\circ(B^1 B)\circ(B^0 B)
\end{align*}
where equation~\refeqn{Bo-push} is applied in each.
\item
By lowering each degree after removing trailing $(B^0 B)$'s,
\begin{align*}
P&\equiv(B^5 B)\circ(B^3 B)\circ(B^2 B)\circ(B^0 B)
\end{align*}
is obtained.
\end{enumerate}
\end{example}

The implementation based on the right application over decreasing polynomials e is available at
\url{https://github.com/ksk/Rho}.
Note that the program does not terminate for the combinator
which does not have the $\rho$-property.
It will not help to decide if a combinator has the $\rho$-property.
One might observe how the terms grow by repetitive right applications
thorough running the program, though.

\subsection{$B$-terms not having the $\rho$-property}
%
%

We prove that the $B$-terms $(B^k B)^{(k+2)n}$ ($k \ge 0,\ n > 0$)
do not have the $\rho$-property.
In particular, we show that the number of variables 
in the $\beta\eta$-normal form of $((B^k B)^{(k+2)n})_{(i)}$ is monotonically non-decreasing
and that it implies the anti-$\rho$-property.
Additionally, after proving that,
we consider a sufficient condition not to have the $\rho$-property through the monotonicity.

First we introduce some notation.
Suppose that the $\beta\eta$-normal form of a $B$-term $X$ is given by $\lambda x_1.\dots\lambda x_n.~\allowbreak x_1~ e_1~ \cdots~ e_k$ for some terms $e_1, \dots, e_k$.
Then we define $l(X) = n$ (the number of variables), $a(X) = k$ (the number of arguments of $x_1$), and $N_i(X) = e_i$ for $i = 1, \dots, k$.
Let $X'$ be another $B$-term and suppose its $\beta\eta$-normal form is given by $\lambda x'_1. \dots \lambda x'_{n'}.~ e'$,
We can see $X~ X' = (\lambda x_1. \ldots \lambda x_n.~ x_1~ e_1~ \cdots~ e_k)~ X' = \lambda x_2. \dots \lambda x_n.~ X'~ e_1~ \cdots~ e_k$
and from \reflem{bform}, its $\beta\eta$-normal form is
\[
  \begin{cases}
    \lambda x_2.\dots \lambda x_n. \lambda x'_{k+1}. \dots \lambda x'_{n'}.~ [e_1, \ldots, e_k]e' & (k \le n')\\
    \lambda x_2.\dots \lambda x_n.~ [e_1, \dots, e_{n'}]e'~ e_{n'+1} ~ \cdots ~ e_k & (\text{otherwise}).
  \end{cases}
\]
Here $[e_1,\dots,e_k]e'$ is the term which is obtained by substituting $e_1, \dots, e_k$ to the variables $x'_1,\dots,x'_k$ in $e'$ in order.

By simple computation with this fact, we get the following lemma:

\begin{lemma}
\lablem{asymp}
Let $X$ and $X'$ be $B$-terms. Then
\begin{align*}
  l(X~X') &= l(X) -1 + \max\{l(X') - a(X), \ 0\}\\
  a(X~X') &= a(X') + a(N_1(X)) + \max\{a(X) - l(X'),\  0\}\\
  N_1(X~X') &=
  \begin{cases}
    [N_2(X),\ldots,N_m(X)]N_1(X') & (\text{if $N_1(X)$ is a variable})\\
    N_1(N_1(X)) & (\text{otherwise})
  \end{cases}
\end{align*}
where $m = \min\{l(N_1(X')),\  a(X)\}$.
\end{lemma}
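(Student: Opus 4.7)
The plan is to unfold the $\beta$-reduction of $X\, X'$ using the explicit normal forms furnished by \reflem{bform}, then read off $l(X\, X')$, $a(X\, X')$, and $N_1(X\, X')$ directly. I would invoke \reflem{bform} to write
$X = \lambda x_1\ldots x_n.\, x_1\, e_1\cdots e_k$
and
$X' = \lambda y_1\ldots y_{n'}.\, y_1\, f_1\cdots f_{k'}$
with $l(X)=n$, $a(X)=k$, $l(X')=n'$, $a(X')=k'$, $N_i(X)=e_i$, $N_i(X')=f_i$. One $\beta$-step yields $X\, X' = \lambda x_2\ldots x_n.\, X'\, e_1\cdots e_k$, and the text just above the lemma already records the two possible shapes of the resulting normal form according to whether $k\le n'$ or $k>n'$.

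Next I would verify each of the three equalities by mechanical accounting. For $l(X\, X')$, I count the remaining $\lambda$-binders: one loses $x_1$ to the $\beta$-step and consumes $\min\{k, n'\}$ of the $y_j$ on the way in, leaving $(n-1)+\max\{n'-k, 0\}$, as claimed. For $a(X\, X')$, substitute into $e' = y_1\, f_1\cdots f_{k'}$ to expose the new head of the body: the argument list of that head is the concatenation of the arguments of $N_1(X)$ (when it is compound, else empty), the substituted $\bar f_1,\ldots,\bar f_{k'}$, and any leftover $e_{n'+1},\ldots,e_k$ from the $k > n'$ case. Summing these three contributions yields exactly $a(X') + a(N_1(X)) + \max\{a(X)-l(X'), 0\}$.

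The delicate step is identifying $N_1(X\, X')$ in the variable sub-case, because it requires tracking which of the bound variables inside $f_1$ actually meets a substituend $e_i$. The key fact that makes this tractable is the ordered-single-occurrence property of $\CL{B}$-terms (a consequence of \reflem{bform}): every bound variable of $X'$ appears exactly once in its body and in left-to-right order, so $f_1 = N_1(X')$ uses precisely the contiguous block of $X'$-variables starting just after the head $y_1$. The substitution $[e_1,\ldots,e_{\min\{k,n'\}}]f_1$ therefore has no effect at $y_1$ and acts on the remaining $y_i$ only through the overlap with the supply $e_2,\ldots,e_{a(X)}$; this determines the list $[N_2(X),\ldots,N_m(X)]\, N_1(X')$ from the statement. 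In the compound sub-case, writing $N_1(X) = z\, g_1\cdots g_p$, the new head of the body is $z$ and its first argument is $g_1 = N_1(N_1(X))$, since the arguments of $z$ precede the substituted $\bar f_i$'s. No step uses anything beyond \reflem{bform} and elementary $\beta\eta$-calculation.
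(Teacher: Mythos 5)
Your proposal is correct and follows essentially the same route as the paper, which simply displays the two possible $\beta\eta$-normal forms of $X~X'$ (according to whether $a(X)\le l(X')$) and asserts the lemma follows ``by simple computation with this fact''; you carry out that computation explicitly, using the same linear-and-ordered variable structure from \reflem{bform} to track heads, argument counts, and the contiguous block of variables inside $N_1(X')$. The only detail glossed over on both sides is that the displayed form is already $\eta$-normal, which follows because an $\eta$-redex in $X~X'$ would force one in $X$ or $X'$ themselves; this does not change the verdict.
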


The $\beta\eta$-normal form of $(B^k B)^{(k+2)n}$ is given by
\[
  \lambda x_1.\ldots\lambda x_{k+(k+2)n+2}.~ x_1~ x_2~ \cdots~ x_{k+1}~ (x_{k+2}~ x_{k+3~} \cdots~ x_{k+(k+2)n+2}).
\]
This is deduced from \reflem{H-uniq} since
the binary tree corresponding to the above $\lambda$-term is
$t = 
\allowbreak
\<\<\ldots\langle\langle\underbrace{\x,\x\rangle, \x\rangle,\ldots,\x}_{k+1}>,
\allowbreak
 \<\ldots\langle\langle\x,\underbrace{\x\rangle,\x\rangle,\ldots,\x}_{(k+2)n}>>$
and $\mathcal{L}(t) = [\underbrace{k,\ldots,k}_{(k+2)n}]$.
Especially, we get $l((B^k B)^{(k+2)n}) = k + (k+2)n + 2$.
In this section, we write $\<\x,\x,\x,\ldots,\x>$ for $\<\ldots\<\<\x,\x>,\x>,\ldots,\x>$ and identify $B$-terms with their corresponding binary trees.

To describe properties of $(B^k B)^{(k+2)n}$,
we introduce a set $T_{k,n}$ which is closed under right application of $(B^k B)^{(k+2)n}$,
that is, $T_{k,n}$ satisfies that ``if $X \in T_{k,n}$ then $X~ (B^k B)^{(k+2)n} \in T_{k,n}$ holds''.
First we inductively define a set of terms $T'_{k,n}$ as follows:
\begin{enumerate}
\item $\x \in T'_{k,n}$
\item $\langle\x,~ s_1,~ \ldots,~ s_{(k+2)n}\rangle \in T'_{k,n}$ if
$s_i=\x$ for a multiple $i$ of $k+2$
and $s_i\in T'_{k,n}$ for the others.
\end{enumerate}
Then we define $T_{k,n}$ by
$T_{k,n} = \left\{ \langle t_0,~ t_1,~ \ldots,~ t_{k+1}\rangle \ \middle| \ t_0,\ t_1, \ldots, t_{k+1} \in T'_{k,n} \right\}$.
It is obvious that $(B^k B)^{(k+2)n} \in T_{k,n}$.
Now we shall prove that $T_{k,n}$ is closed under right application of $(B^k B)^{(k+2)n}$.
\begin{lemma}
\lablem{closed}
If $X \in T_{k,n}$ then $X~ (B^k B)^{(k+2)n} \in T_{k,n}$.
\end{lemma}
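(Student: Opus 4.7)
My plan is to prove the lemma by directly computing the $\beta\eta$-normal form of $X\cdot Y$ with $Y=(B^k B)^{(k+2)n}$ and then checking that its binary tree again decomposes as $\<u_0,u_1,\ldots,u_{k+1}\>$ with each $u_i\in T'_{k,n}$. From the explicit normal form of $Y$ stated just before the lemma, the tree of $Y$ is $\<L,V\>$ where $L$ is left-leaning with $k+1$ leaves and $V$ is left-leaning with $(k+2)n+1$ leaves; in particular $l(Y)=k+(k+2)n+2$, $a(Y)=k+1$, and $V$ already lies in $T'_{k,n}$ because the $(k+2)n$ entries following its leading $\x$ are all $\x$.

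Since $X=\<t_0,t_1,\ldots,t_{k+1}\>$ with every $t_i\in T'_{k,n}$, I would split on the clause of the definition of $T'_{k,n}$ that produces $t_0$. In Case A, $t_0=\x$, so the head of $X$ carries exactly the $k+1$ arguments $t_1,\ldots,t_{k+1}$; because $L$ also has $k+1$ leaves, $\beta$-reducing $Y\,t_1\cdots t_{k+1}$ consumes all of them and yields a term whose binary tree is $\<t_1,t_2,\ldots,t_{k+1},V\>$. Reading this as $\<u_0,\ldots,u_{k+1}\>$ gives $u_i=t_{i+1}\in T'_{k,n}$ for $i\le k$ and $u_{k+1}=V\in T'_{k,n}$, so $X\cdot Y\in T_{k,n}$.

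In Case B, $t_0=\<\x,s_1,\ldots,s_{(k+2)n}\>$ with $s_j=\x$ when $(k+2)\mid j$ and $s_j\in T'_{k,n}$ otherwise. The head of $X$ now carries $(k+2)n+(k+1)=l(Y)-1$ arguments $s_1,\ldots,s_{(k+2)n},t_1,\ldots,t_{k+1}$, so after $\beta$-reduction exactly one $\lambda$-bound variable of $Y$ remains, and the resulting tree is $\<s_1,\ldots,s_{k+1},R\>$ with $R=\<s_{k+2},s_{k+3},\ldots,s_{(k+2)n},t_1,\ldots,t_{k+1},\x\>$. I would take $u_i=s_{i+1}$ for $i\le k$ and $u_{k+1}=R$: each $s_j$ with $1\le j\le k+1$ lies in $T'_{k,n}$ because $(k+2)\nmid j$, and $R$ starts with $s_{k+2}=\x$ followed by exactly $(k+2)n$ further entries.

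The main obstacle is the index bookkeeping in Case B: after the shift that promotes $s_{k+2}$ to the leading $\x$ of $R$, one must check that, in the remaining list $s_{k+3},\ldots,s_{(k+2)n},t_1,\ldots,t_{k+1},\x$, the positions divisible by $k+2$ hold exactly the $\x$-entries. Within the $s$-block this follows from the fact that shifting indices by $k+2$ preserves residues mod $k+2$; within the $t$-block no index is a multiple of $k+2$ because the block has length only $k+1$; and the trailing $\x$ lands precisely at position $(k+2)n$. This triple alignment is exactly why the exponent $(k+2)n$ is the right choice for $T_{k,n}$ to be closed under right application.
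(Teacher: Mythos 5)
Your proof is correct and follows essentially the same route as the paper: split on whether $t_0=\x$ or $t_0$ arises from clause~2 of $T'_{k,n}$, compute the tree of $X\,(B^kB)^{(k+2)n}$ in each case, and verify membership in $T_{k,n}$. Your explicit residue-mod-$(k+2)$ bookkeeping in Case~B just spells out what the paper leaves as ``we can easily see.''
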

\begin{proof}
From the definition of $T_{k,n}$, 
if $X \in T_{k,n}$ then $X$ can be written
in the form $\langle t_0,\ t_1,\ \dots,\ t_{k+1}\rangle$
for some $t_0, \ldots, t_{k+1} \in T'_{k,n}$.
In the case where $t_0 = \x$, we have $X~ (B^k B)^{(k+2)n} = \langle t_1,~\ldots,~ t_{k+1},~ \langle\underbrace{\x,~ \ldots,~ \x}_{(k+2)n}\rangle\rangle \in T_{k,n}$.
In the case where $t_0$ has the form of 2 in the definition of $T'_{k,n}$,
then $X = \langle\x,~ s_1,~ \ldots,~ s_{(k+2)n},~ t_1,~ \ldots,~ t_{k+1}\rangle$
with $s_i\in T'_{k,n}$ and so
\[
  X~ (B^k B)^{(k+2)n} = \langle s_1,~ \ldots,~ s_{k+1},~ \langle s_{k+2},~ \ldots,~ s_{(k+2)n},~ t_1,~ \ldots,~ t_{k+1},~ \x\rangle\rangle.
\]
We can easily see $s_1,\ \ldots,\ s_{k+1}$, and $\langle s_{k+2},\ \ldots,\ s_{(k+2)n},\ t_1,\ \ldots,\ t_{k+1}, \ \x\rangle$ are in $T'_{k,n}$.
\end{proof}

From the definition of $T_{k,n}$, we can compute that $a(X)$ equals $k+1$ or $(k+2)n + k + 1$.
Particularly, we get the following:
\begin{lemma}
\lablem{targs}
For any $X \in T_{k,n}$, $a(X) \le (k+2)n + k + 1 = l((B^k B)^{(k+2)n}) - 1$.
\end{lemma}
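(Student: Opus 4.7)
The plan is to perform a short case analysis on the structure of $X \in T_{k,n}$, using only the definitions of $T_{k,n}$ and $T'_{k,n}$ together with the observation that $a(X)$ can be read off directly from the left-spine of the binary tree representation of $X$.

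First I would unfold the definition: if $X \in T_{k,n}$, then $X = \<t_0, t_1, \ldots, t_{k+1}\>$ with each $t_i \in T'_{k,n}$. Using the convention that $\<u_0, u_1, \ldots, u_m\> = \<\ldots\<\<u_0, u_1\>, u_2\>, \ldots, u_m\>$ (the left-associative form), I would note that for any $B$-term whose tree is $\<\x, u_1, \ldots, u_m\>$, the corresponding $\lambda$-term has the shape $\lambda x_1.\ldots.\lambda x_l.\, x_1\, e_1\, \cdots\, e_m$, so that $a$ applied to this term equals $m$ — in other words, $a(X)$ is exactly the number of right siblings along the leftmost branch of its tree, once that branch has been traced down to the leaf $\x$.

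Next I would split on the structure of $t_0 \in T'_{k,n}$ using the inductive definition of $T'_{k,n}$. In the first case, $t_0 = \x$, so $X = \<\x, t_1, \ldots, t_{k+1}\>$ and hence $a(X) = k+1$. In the second case, $t_0 = \<\x, s_1, \ldots, s_{(k+2)n}\>$ for some $s_j \in T'_{k,n}$ (with $s_i = \x$ for multiples $i$ of $k+2$), and by flattening the outermost left-spine we get $X = \<\x, s_1, \ldots, s_{(k+2)n}, t_1, \ldots, t_{k+1}\>$, which gives $a(X) = (k+2)n + k + 1$. In both cases $a(X) \le (k+2)n + k + 1$, and since $l((B^k B)^{(k+2)n}) = k + (k+2)n + 2$, this bound equals $l((B^k B)^{(k+2)n}) - 1$, yielding the claim.

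I do not expect any genuine obstacle here; the statement is essentially a direct reading of the two clauses in the definition of $T'_{k,n}$. The only point that needs care is the translation between the tree notation $\<\ldots\>$ and the value of $a$, but this is immediate once one observes that the left-associative flattening of $\<t_0, t_1, \ldots, t_{k+1}\>$ with $t_0 = \<\x, s_1, \ldots, s_{(k+2)n}\>$ yields a left-spine of length $(k+2)n + k + 2$ bottoming out in $\x$, so that $x_1$ is applied to exactly $(k+2)n + k + 1$ arguments.
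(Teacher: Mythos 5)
Your proposal is correct and matches the paper's reasoning: the paper simply observes, from the definition of $T_{k,n}$, that $a(X)$ equals either $k+1$ (when $t_0=\x$) or $(k+2)n+k+1$ (when $t_0$ has the second form), which is exactly your case analysis on $t_0$. The translation between the flattened left-spine and the value of $a$ is handled the same way, so there is nothing to add.
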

This lemma is crucial to show that the number of variables in $((B^k B)^{(k+2)n})_{(i)}$ is monotonically non-decreasing.
Put $Z = (B^k B)^{(k+2)n}$ for short.
Since $Z \in T_{k,n}$,
we have $\{Z_{(i)} \mid i \ge 1\} \subset T_{k,n}$ by \reflem{closed}.
Using \reflem{targs}, 
we can simplify \reflem{asymp} in the case where $X = Z_{(i)}$ and $X'=Z$ as follows:
\begin{align}
  l(Z_{(i+1)}) &= l(Z_{(i)}) + (k+2)n + k + 1 - a(Z_{(i)}) \labeqn{lasymp}\\
  a(Z_{(i+1)}) &= a(N_1(Z_{(i)})) + k + 1 \labeqn{aasymp}\\
  N_1(Z_{(i+1)}) &=
  \begin{cases}
    N_2(Z_{(i)}) & (\text{if $N_1(Z_{(i)})$ is a variable})\\
    N_1(N_1(Z_{(i)})) & (\text{otherwise}).
  \end{cases} \labeqn{nasymp}
\end{align}
By \refeqn{lasymp} and \reflem{targs}, we get
$l(Z_{(i+1)}) \ge l(Z_{(i)})$.

To prove that $Z$ does not have the $\rho$-property,
it suffices to show the following:
\begin{lemma}
For any $i \ge 1$, there exists $j > i$ that satisfies 
$l(Z_{(j)}) > l(Z_{(i)})$.
\end{lemma}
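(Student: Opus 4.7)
The plan is to leverage the recurrence \refeqn{lasymp} together with the two-valued behaviour of $a$ on $T_{k,n}$: since $a(Z_{(i)}) \in \{k+1,\ (k+2)n + k + 1\}$, we have $l(Z_{(i+1)}) > l(Z_{(i)})$ exactly when $a(Z_{(i)}) = k+1$. Combined with the already-established monotonicity $l(Z_{(i+1)}) \ge l(Z_{(i)})$, it suffices to find, for every $i$, some $j \ge i$ with $a(Z_{(j)}) = k+1$; then $l(Z_{(j+1)}) > l(Z_{(j)}) \ge l(Z_{(i)})$ will give the required index with $j+1 > i$.

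Write each $Z_{(i)} \in T_{k,n}$ in the form $\<t_0^{(i)},\, t_1^{(i)},\, \ldots,\, t_{k+1}^{(i)}>$ granted by the definition of $T_{k,n}$, with $t_r^{(i)} \in T'_{k,n}$. By the definition of $T'_{k,n}$, the component $t_0^{(i)}$ is either $\x$ (in which case $a(Z_{(i)}) = k+1$) or of the shape $\<\x,\, s_1,\, \ldots,\, s_{(k+2)n}>$ (in which case $a(Z_{(i)}) = (k+2)n + k + 1$). So the target reduces to showing that $t_0^{(j)} = \x$ for some $j \ge i$.

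The update rule for $t_0$ can be read off directly from the case analysis in the proof of \reflem{closed}: if $t_0^{(i)} = \x$ then $t_0^{(i+1)} = t_1^{(i)}$, while if $t_0^{(i)} = \<\x,\, s_1,\, \ldots,\, s_{(k+2)n}>$ then $t_0^{(i+1)} = s_1$. The key observation is that in the second case $s_1$ is a proper subtree of $t_0^{(i)}$, so $\size{t_0^{(i+1)}} < \size{t_0^{(i)}}$. Thus we cannot remain in the non-leaf case for more than $\size{t_0^{(i)}}$ consecutive steps, which forces $t_0^{(j)} = \x$ for some $j$ in the interval $[i,\, i + \size{t_0^{(i)}}]$.

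The argument is conceptually simple; the only real effort is extracting and packaging the update rule for $t_0$ from the case analysis of \reflem{closed} and wrapping it in a well-founded induction on $\size{t_0}$. No new structural result about $B$-terms is required beyond \reflem{closed}, the dichotomy for $a$ noted just before \reflem{targs}, and the recurrence \refeqn{lasymp}.
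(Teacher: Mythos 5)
Your proof is correct, but it is organized differently from the paper's. The paper argues by contradiction: if $l(Z_{(j)})$ stagnated for all $j>i$, then \refeqn{lasymp} forces $a(Z_{(j)})=(k+2)n+k+1$, then \refeqn{aasymp} forces $a(N_1(Z_{(j)}))=(k+2)n>0$ so that $N_1(Z_{(j)})$ is never a variable, and \refeqn{nasymp} then yields $N_1(Z_{(j)})=N_1(N_1(\cdots N_1(Z_{(i)})\cdots))$ for arbitrarily deep nestings, contradicting the finiteness of $Z_{(i)}$. You instead work directly with the tree decomposition $Z_{(i)}=\<t_0^{(i)},\dots,t_{k+1}^{(i)}>$, extract the head-update rule from the proof of \reflem{closed} ($t_0\mapsto t_1$ if $t_0=\x$, and $t_0=\<\x,s_1,\dots,s_{(k+2)n}>\mapsto s_1$ otherwise), and run a well-founded descent on $\size{t_0^{(i)}}$ to produce an explicit $j\in[i,\,i+\size{t_0^{(i)}}]$ with $t_0^{(j)}=\x$, hence $a(Z_{(j)})=k+1$ and a strict increase at step $j+1$. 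The underlying combinatorial content is the same --- one cannot descend into a finite tree forever, and indeed your $s_1$ is exactly the paper's $N_1(Z_{(j)})$ in this situation --- but your version is constructive and quantitative, giving a bound on how long $l$ can stay flat, whereas the paper's is shorter because it reuses the already-packaged recurrences of \reflem{asymp} instead of re-deriving the update rule for $t_0$. Both the dichotomy $a(X)\in\{k+1,(k+2)n+k+1\}$ on $T_{k,n}$ and the identification of the $t_0=\x$ case with $a=k+1$ are stated or immediate from the paper's setup, so no step of your argument is unsupported.
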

\begin{proof}
Suppose that there exists $i \ge 1$ that satisfies $l(Z_{(i)}) = l(Z_{(j)})$ for any $j > i$.
We get $a(Z_{(j)}) = (k+2)n + k + 1$ by \refeqn{lasymp} and then $a(N_1(Z_{(j)})) = (k+2)n$ by \refeqn{aasymp}.
Therefore $N_1(Z_{(j)})$ is not a variable for any $j>i$ and
from \refeqn{nasymp}, we obtain $N_1(Z_{(j)}) = N_1(N_1(Z_{(j-1)})) = \cdots = \underbrace{N_1(\cdots N_1(}_{j-i+1}Z_{(i)})\cdots)$ for any $j > i$.
However, this implies that $Z_{(i)}$ has infinitely many variables and it contradicts.
\end{proof}

Now, we get the desired result:
\begin{theorem}
\labthr{b2anti}
For any $k \ge 0$ and $n > 0$, $(B^k B)^{(k+2)n}$ does not have the $\rho$-property.
\end{theorem}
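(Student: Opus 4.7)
The plan is to deduce the theorem immediately from the preceding lemma combined with the monotonicity $l(Z_{(j+1)}) \ge l(Z_{(j)})$ established just after \refeqn{nasymp}, where I abbreviate $Z = (B^k B)^{(k+2)n}$. I would argue by contradiction: suppose $Z$ has the $\rho$-property, giving a single equality $Z_{(i)} = Z_{(i+p)}$ for some $i \ge 1$ and period $p \ge 1$.

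The central step of the plan is to promote this single equation into the statement that $l(Z_{(j)})$ is eventually periodic in $j$ with period $p$. This uses only that right-application by $Z$ respects $\beta\eta$-equivalence (congruence), so iterating it $m$ times on both sides yields $Z_{(i+m)} = Z_{(i+p+m)}$, and in particular equal lengths. I would then chain the monotonicity inequalities $l(Z_{(i)}) \le l(Z_{(i+1)}) \le \cdots \le l(Z_{(i+p)}) = l(Z_{(i)})$ to force $l(Z_{(j)})$ to be constant for all $j \ge i$, which directly contradicts the preceding lemma, since that lemma produces some $j' > i$ with $l(Z_{(j')}) > l(Z_{(i)})$.

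I do not foresee any real obstacle, since all the heavy lifting — closure of $T_{k,n}$ under right application (\reflem{closed}), the arity bound (\reflem{targs}), the recurrences \refeqn{lasymp}--\refeqn{nasymp}, and the strict-growth lemma just proved — is already in hand. The only point I would make explicit is the congruence of $\beta\eta$-equivalence with respect to application, which is what converts a single syntactic equality $Z_{(i)} = Z_{(i+p)}$ into periodicity of the entire tail of the length sequence; everything after that is a two-line chain of inequalities sandwiching a non-decreasing periodic sequence between two equal endpoints.
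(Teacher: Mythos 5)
Your proposal is correct and follows essentially the same route as the paper: the paper derives the theorem from the monotonicity $l(Z_{(i+1)})\ge l(Z_{(i)})$ together with the immediately preceding lemma guaranteeing some $j>i$ with $l(Z_{(j)})>l(Z_{(i)})$, exactly the two ingredients you combine. Your explicit remark that a single equality $Z_{(i)}=Z_{(i+p)}$ propagates (by congruence of $\beta\eta$-equivalence under application) to periodicity of the whole tail is just a spelled-out version of the step the paper leaves implicit.
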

The key fact which enables to show the anti-$\rho$-property of $(B^k B)^{(k+2)n}$ is
the existence of the set $T_{k,n} \supset \left\{((B^k B)^{(k+2)n})_{(i)} \ \middle|\ i \ge 1 \right\}$ which satisfies \reflem{targs}.
In the same way to above, we can show the anti-$\rho$-property of a $B$-term which has such a ``good'' set. That is,
\begin{theorem}
\labthr{antibgeneral}
Let $X$ be a $B$-term and $T$ be a set of $B$-terms.
If $\left\{ X_{(i)} \ \middle|\  i \ge 1\right\} \subset T$ and $l(X) \ge a(X') + 1$ for any $X' \in T$,
then $X$ does not have the $\rho$-property.
\end{theorem}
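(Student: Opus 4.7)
The plan is to generalize the argument used for \refthr{b2anti}. The hypothesis $l(X) \ge a(X') + 1$ for all $X' \in T$ plays the same role as \reflem{targs} in the concrete case: it keeps the number of head-arguments of each iterate strictly below $l(X)$. I would first apply \reflem{asymp} to $X_{(i+1)} = X_{(i)}~X$. Since $X_{(i)} \in T$, the hypothesis yields $a(X_{(i)}) \le l(X) - 1$, so the two $\max$ expressions simplify and I obtain
\begin{align*}
l(X_{(i+1)}) &= l(X_{(i)}) + (l(X) - 1 - a(X_{(i)})), \\
a(X_{(i+1)}) &= a(X) + a(N_1(X_{(i)})),
\end{align*}
which immediately gives that $(l(X_{(i)}))_i$ is non-decreasing.

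Next I would assume toward contradiction that $X$ has the $\rho$-property, so $X_{(i)} = X_{(j)}$ for some $i < j$. Since right-application by $X$ is deterministic, the sequence $(X_{(i)})_i$ is periodic from index $i$ onward, and a periodic non-decreasing sequence of integers must eventually be constant; say $l(X_{(i)}) = c$ for all $i \ge i_0$. The first identity above then forces $a(X_{(i)}) = l(X) - 1$, and the second forces $a(N_1(X_{(i)})) = l(X) - 1 - a(X)$ for every $i \ge i_0$.

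The final step is to exploit the $N_1$ clause of \reflem{asymp}. For any $B$-term $X$, the $\beta\eta$-normal form satisfies $l(X) > a(X) + 1$: otherwise each argument $e_i$ in $\lambda x_1.\dots.\lambda x_n.~x_1~e_1~\cdots~e_k$ would have to consist of exactly one variable, making the term $\eta$-reducible and so not in normal form. Hence $a(N_1(X_{(i)})) \ge 1$ for $i \ge i_0$, so $N_1(X_{(i)})$ is never a variable, and the non-variable clause of \reflem{asymp} gives $N_1(X_{(i+1)}) = N_1(N_1(X_{(i)}))$. Iterating, $N_1(X_{(i_0+k)})$ is $N_1$ applied $k+1$ times to $X_{(i_0)}$; but $X_{(i_0)}$ is a fixed finite term, so descending along the leftmost head-argument must reach a variable after finitely many steps, yielding $a = 0$ and contradicting $a(N_1(X_{(i)})) \ge 1$. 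The main obstacle is this last step: making precise that iterated extraction of the leftmost head-argument terminates on a finite term, which amounts to a straightforward induction on the leftmost-child depth of the corresponding binary tree but needs careful bookkeeping to align with the $N_1$ formalism.
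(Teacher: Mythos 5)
Your proof is correct and follows essentially the same route as the paper, which establishes \refthr{antibgeneral} by transplanting the argument given for $(B^k B)^{(k+2)n}$: monotonicity of $l$ via \reflem{asymp} under the hypothesis $a(X')\le l(X)-1$, eventual constancy of $l(X_{(i)})$ under periodicity, and the contradiction from an infinite descent along $N_1$. The one place you go beyond a direct transcription is the observation that $l(X)\ge a(X)+2$ for every $B$-term (needed to guarantee $a(N_1(X_{(i)}))\ge 1$, which in the concrete case came from $(k+2)n\ge 2$), and that observation is valid and fills a detail the paper leaves implicit.
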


Here is an example of the $B$-terms which satisfies the condition in \refthr{antibgeneral} with some set $T$.
Consider $X = (B^2 B)^2 \circ (B B)^2 \circ B^2
= \<\x,\ \<\x,\ \<\x,\ \<\x,\ \x,\ \x>,\ \x>,\ \x>>$.
We inductively define $T'$ as follows:
\begin{enumerate}
\item $\x \in T'$
\item For any $t \in T'$, $\langle \x,\ t,\ \x \rangle \in T'$
\item For any $t_1, t_2 \in T'$, $\langle \x,\ t_1,\ \x,\ \langle \x,\ t_2,\ \x \rangle,\ \x \rangle \in T'$
\end{enumerate}
Then $T = \left\{ \<t_1,\ \<\x,\ t_2,\ \x>> ~\middle|~ t_1, t_2 \in T' \right\}$ satisfies the condition in \refthr{antibgeneral}.
It can be checked simply by case analysis. Thus
\begin{theorem}
$(B^2 B)^2\circ (B B)^2 \circ B^2$ does not have the $\rho$-property.
\end{theorem}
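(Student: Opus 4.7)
The plan is to apply Theorem~\refthr{antibgeneral} to the explicitly given set $T$, which requires two verifications: (i) $\{X_{(i)} \mid i \ge 1\} \subset T$ and (ii) $l(X) \ge a(X')+1$ for every $X' \in T$. I will establish (i) by induction: first show $X \in T$, then show $T$ is closed under right application of $X$. For $X \in T$, write the tree of $X$ as $\<\x, \<\x, t_2, \x>>$ with $t_2 = \<\x, \<\x, \x, \x>, \x>$; two applications of rule~2 starting from $\x \in T'$ give $t_2 \in T'$, while $t_1 = \x \in T'$ by rule~1, so $X \in T$.

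For closure, let $Y = \<t_1, \<\x, t_2, \x>> \in T$. Since $t_1 \in T'$, the three rules defining $T'$ split $t_1$ into the forms $t_1 = \x$, $t_1 = \<\x, s, \x>$, or $t_1 = \<\x, s_1, \x, \<\x, s_2, \x>, \x>$ with $s, s_1, s_2 \in T'$; flattening $Y$ along its leftmost path then yields $a(Y) \in \{1, 3, 5\}$ respectively. Exactly as in the proof of \reflem{closed}, the tree of $YX$ is obtained by substituting the $a(Y)$ right-children of $Y$'s leftmost leaf for the first $a(Y)$ leaves of the tree of $X$ (well-defined since $a(Y) \le 5 < 8 = l(X)$). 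In each of the three cases I compute the resulting tree and verify that it has the form $\<t_1', \<\x, t_2', \x>>$ with $t_1'$ either $\x$ or a substituted piece of $Y$, hence in $T'$, and with $t_2'$ assembled from $t_2$, the remaining $\x$-leaves of $X$, and the substituted right-children of $Y$ in a shape matching rule~2 or rule~3 of $T'$.

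Verification (ii) is immediate: counting the leaves of the given binary tree of $X$ gives $l(X) = 8$, and the above case analysis shows $a(X') \in \{1, 3, 5\}$ for every $X' \in T$, so $a(X') + 1 \le 6 \le 8$. Theorem~\refthr{antibgeneral} then yields that $X$ does not have the $\rho$-property.

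The main obstacle is the bookkeeping in the closure case analysis. After substitution, a subterm of $YX$ may appear, at first glance, not to match any rule of $T'$ because its outermost binary form has a non-$\x$ subtree on the left. The trick is to re-expand the nested binary notation into a longer flat form using $\<\<a, b>, c> = \<a, b, c>$: after such unflattening, the subterm does fit rule~2 or rule~3, with the inner $T'$-positions filled by $s$, $s_1$, $s_2$, $t_2$, or $\<\x, t_2, \x>$, each of which lies in $T'$ either by hypothesis or by one further application of rule~2. This is the same phenomenon that makes \reflem{closed} work for $(B^k B)^{(k+2)n}$, and a similar case-by-case unflattening handles the three cases here.
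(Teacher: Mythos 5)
Your proposal is correct and follows essentially the same route as the paper: the paper's proof of this theorem is just the remark that the hypotheses of \refthr{antibgeneral} ``can be checked simply by case analysis'' for the given set $T$, and your write-up carries out exactly that check (membership $X\in T$, closure of $T$ under right application of $X$ via the three cases $a(Y)\in\{1,3,5\}$ with the unflattening trick, and the bound $a(X')+1\le 6\le 8=l(X)$). No discrepancies; you have simply supplied the details the paper leaves implicit.
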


\refthr{antibgeneral} gives a possible technique to prove the monotonicity with respect to $l(X_{(i)})$, or, the anti-$\rho$-property of $X$, for some $B$-term $X$.
Moreover, we can consider another problem on $B$-terms: ``Give a necessary and sufficient condition to have the monotonicity for $B$-terms.''

\section{Concluding remark}
\labsec{concl}
We have investigated the $\rho$-properties of $B$-terms in particular forms
so far.
While the $B$-terms equivalent to $B^n B$ with $n\leq 6$ have the $\rho$-property,
the $B$-terms $(B^k B)^{(k+2)n}$ with $k\geq0$ and $n>0$
and $(B^2 B)^2\circ(B B)^2\circ B^2$ do not.
In this section,
remaining problems related to these results are introduced
and possible approaches to illustrate them are discussed.

\subsection{Remaining problems}
The $\rho$-property is defined any combinatory terms
(and closed $\lambda$-terms).
We investigates it only for $B$-terms 
as a simple but interesting instance
in the present paper.
%
From his observation on repetitive right applications for several $B$-terms,
Nakano~\cite{Nakano08trs} has conjectured as follows.
\begin{conjecture}
\labcnj{B-conj}
$B$-term $e$ has the $\rho$-property if and only if 
$e$ is a monomial, \ie, $e$ is equivalent to $B^n B$ with $n\geq0$.
\end{conjecture}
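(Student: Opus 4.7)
The conjecture asserts a dichotomy between monomial and non-monomial $B$-terms, and the two directions require quite different techniques. The only-if direction (non-monomial implies no $\rho$-property) appears to be within reach of the machinery developed above, whereas the if direction (every monomial $B^n B$ has the $\rho$-property) is the main obstacle: the data in \reffig{rhobsequence} show that entry points and cycle sizes grow extremely fast with $n$, which suggests that no elementary finiteness argument will suffice.

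\textbf{Only-if direction.} Let $e$ be a non-monomial $B$-term. By \refthr{canonical} its canonical form is a decreasing polynomial $(B^{n_1} B) \circ (B^{n_2} B) \circ \cdots \circ (B^{n_k} B)$ with $k \geq 2$ and $n_1 \geq \cdots \geq n_k \geq 0$. The plan is to exhibit a set $T_e$ of binary trees satisfying the hypotheses of \refthr{antibgeneral}: it must contain $\{\sapp{e}{i} \mid i \geq 1\}$, be closed under right application by $e$, and satisfy $a(X') \leq l(e) - 1$ for every $X' \in T_e$. Generalising the inductive definitions of $T_{k,n}$ in the proof of \refthr{b2anti} and of the set $T$ in the example preceding it, one defines $T_e$ from building-block shapes that mirror the outermost structure of the tree of $e$. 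Closure under right application follows from a case analysis on the leftmost subtree of an element of $T_e$, as in \reflem{closed}. The bound $a(X') \leq l(e) - 1$ is where the hypothesis $k \geq 2$ is essential: the presence of at least two top-level monomial units in $e$ keeps the root of every iterate strictly narrower than $l(e)$.

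\textbf{If direction.} Fix $n \geq 0$ and put $X = B^n B$. To show $X$ has the $\rho$-property it suffices, by \refthr{canonical}, to prove that the set of canonical decreasing polynomials reached by $\{\sapp{X}{i} \mid i \geq 1\}$ is finite. From \refeqn{lasymp} one has $l(\sapp{X}{i+1}) = l(\sapp{X}{i}) + (n+2) - a(\sapp{X}{i})$, so finiteness forces the sequence $a(\sapp{X}{i})$ to satisfy $a(\sapp{X}{i}) \geq n+2$ in a time-averaged sense and with bounded fluctuation. The plan is to analyse the dynamical system on canonical polynomials induced by right application of $X$, using the composition algorithm derived from \reflem{comp-app} together with the recurrence \refeqn{nasymp} on $N_1(\sapp{X}{i})$, and to identify a potential function on canonical polynomials whose decrease compensates the growth of $l$ whenever $a(\sapp{X}{i}) < n+2$.

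\textbf{The main obstacle.} The difficulty of the if direction is mirrored by the super-polynomial growth of the pairs in \reffig{rhobsequence}: the state space to be confined must itself be large, so the required potential must be correspondingly delicate. It is tempting but probably too optimistic to expect a simple closed-form bound on $l(\sapp{X}{i})$; more realistically, one needs invariants of the full canonical polynomial, not only $l$ and $a$, that are preserved or slowly varying along the iteration. Establishing such invariants uniformly in $n$ is the chief technical challenge, and is plausibly the reason that \refcnj{B-conj} has resisted proof despite the simplicity of its statement and its straightforward numerical verification for small~$n$.
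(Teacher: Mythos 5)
The statement you are asked about is labelled a \emph{conjecture} in the paper, and the paper offers no proof of it: the authors only verify the if-part computationally for $B^n B$ with $n\leq 6$ (\reffig{rhobsequence}) and establish the only-if-part for the specific families $(B^k B)^{(k+2)n}$ ($k\geq 0$, $n>0$) and the single term $(B^2 B)^2\circ(B\,B)^2\circ B^2$ (\refthr{b2anti} and the surrounding discussion). Your proposal is likewise not a proof but a plan, and both halves of the plan contain genuine gaps. For the only-if direction, the decisive step --- constructing, for an \emph{arbitrary} non-monomial $e$ with decreasing polynomial of length $k\geq 2$, a set $T_e$ that contains all iterates $\sapp{e}{i}$, is closed under right application of $e$, and satisfies the uniform bound $a(X')\leq l(e)-1$ --- is exactly what is missing from the paper and from your sketch. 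Saying that $T_e$ is ``defined from building-block shapes that mirror the outermost structure of the tree of $e$'' does not specify a construction, and the closure argument of \reflem{closed} relies heavily on the very regular shape of $(B^k B)^{(k+2)n}$ (all degrees equal to $k$, and the number of units a multiple of $k+2$ so that the leftmost subtree is consumed in whole blocks). For a general non-monomial, e.g.\ $(B^1 B)\circ(B^0 B)$, no such invariant family is known, and the claim that ``$k\geq 2$ keeps the root of every iterate strictly narrower than $l(e)$'' is an unsupported hope: $a(\sapp{e}{i})$ can in principle exceed $l(e)-1$ for some iterate, and one would have to prove it cannot.

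For the if direction you candidly concede that the required potential function is not exhibited, so nothing is proved there either; note also that your appeal to \refeqn{lasymp} is not immediately licensed, since that recurrence was derived for $X=Z_{(i)}$ with $Z=(B^k B)^{(k+2)n}$ under the hypothesis $a(Z_{(i)})\leq l(Z)-1$ supplied by \reflem{targs}, and for $X=B^n B$ the general formula of \reflem{asymp} with its $\max\{\cdot,0\}$ terms must be used instead. In short, your proposal correctly maps the terrain --- the only-if direction is the more approachable one via \refthr{antibgeneral}, and the if direction is the hard open problem --- but neither direction is closed, and the paper itself leaves both open beyond the special cases cited above. If you wish to make concrete progress, the most tractable target suggested by your plan would be to extend the $T_e$-construction to all \emph{powers} $(B^k B)^m$ (not just $m$ a multiple of $k+2$) or to products of two distinct monomials, where the combinatorics of the leftmost subtree under right application can still be controlled explicitly.
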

The if-part for $n\leq 6$ has been shown by computation
and the only-if-part for $(B^k B)^{(k+2)n}~\allowbreak(k\ge0,n>0)$
and $(B^2 B)^2\circ(B B)^2\circ B^2$
has been shown by \refthr{b2anti}.
This conjecture implies that the $\rho$-property of $B$-terms is decidable.
We surmise that the $\rho$-property of even $BCK$- and $BCI$-terms
is decidable.
The decidability for the $\rho$-property of $S$-terms and $L$-terms can
also be considered.
Waldmann's work on a ration representation of normalizable $S$-terms may
be helpful to solve it.
We expect that none of $S$-terms have the $\rho$-property
as $S$ itself does not, though.
Regarding to $L$-terms,
%
Statman's work~\cite{Statman89sc} may be helpful
where
equivalence of $L$-terms is shown decidable
up to a congruence relation induced by $L~e_1~e_2\to e_1~(e_2~e_2)$.
It would be interesting to investigate
the $\rho$-property of $L$-terms in this setting.


\subsection{Possible approaches}
The present paper introduces
a canonical representation
to make equivalence check of $B$-terms easier.
The idea of the representation is based on that
we can lift all $\circ$'s (2-argument $B$)
to outside of $B$ (1-argument $B$)
by equation~\refeqn{Bo-distr}.
One may consider it the other way around.
Using the equation, 
we can lift all $B$'s (1-argument $B$)
to outside of $\circ$ (2-argument $B$).
Then one of the arguments of $\circ$ becomes $B$.
By equation~\refeqn{Bo-push},
we can move all $B$'s to right.
Thereby we find another canonical representation for $B$-terms
given by
\begin{align*}
e &::= B \mid B~ e \mid e \circ B
\end{align*}
whose uniqueness could be easily proved in a way similar to \refthr{canonical}.

Waldmann~\cite{Waldmann13email} suggests
that the $\rho$-property of $B^{n} B$ may be checked
even without converting $B$-terms into canonical forms.
He simply defines $B$-terms by
\begin{align*}
e &::= B^k \mid e~ e
\end{align*}
and regards $B^k$ as a constant
which has a rewrite rule
$B^k~e_1~e_2~\dots~e_{k+2}\to e_1~(e_2~\dots~e_{k+2})$.
He implemented a check program in Haskell
to confirm the $\rho$-property.
Even in the restriction on rewriting rules,
he found that
$\sapp{(B^0 B)}{9}=\sapp{(B^0 B)}{13}$,
$\sapp{(B^1 B)}{36}=\sapp{(B^1 B)}{56}$,
$\sapp{(B^2 B)}{274}=\sapp{(B^2 B)}{310}$ and
$\sapp{(B^3 B)}{4267}=\sapp{(B^3 B)}{10063}$,
in which it requires a bit more right applications
to find the $\rho$-property than the case of a canonical representation.
If the $\rho$-property of $B^n B$ for any $n\geq0$ is shown
under the restricted equivalence given by rewriting rules,
then we can conclude the if-part of \refcnj{B-conj}.


Another possible approach is to observe the change of (principal) types
by right repetitive application.
Although there are many distinct $\lambda$-terms of the same type,
we can consider a desirable subset of typed $\lambda$ terms.
As shown by Hirokawa~\cite{Hirokawa93tcs},
each $BCK$-term can be characterized by its type,
that is, 
any two $\lambda$-terms in $\CL{BCK}$ of the same principal type are identical
up to $\beta$-equivalence.
This approach may require to observe unification between types
in a clever way.








\end{document}